\newcommand{\remove}[1]{}
\def\roh#1{{{\color{red}#1}}}
\def\fskip#1{}
\def\rhoapc{\rho_{\text{APC}}}
\def\rhodhbm{\rho_{\text{HBM}}}
\def\R{\mathbb{R}}
\def\E{\mathbb E}
\def\tilproj{\tilde P}
\def\phimin{\phi_{\text{min}}}
\def\thetamin{\theta_{\min}}
\def\row{\mathcal R}
\DeclareMathOperator*{\argmax}{arg\,max}
\DeclareMathOperator*{\argmin}{arg\,min}
\DeclarePairedDelimiterX {\infdivx}[2]{(}{)}{%
  #1\;\delimsize\|\;#2%
}
\newtheorem{theorem}{Theorem}
\newtheorem{corollary}{Corollary}
\newtheorem{definition}{Definition}
\newtheorem{example}{Example}
\newtheorem{proposition}{Proposition}
\newtheorem{remark}{Remark}
\newcommand{\norm}[1]{\left\lVert#1\right\rVert}
\def\rhodgd{\rho_{\text{DGD} }}
\begin{document}
\title{A Comparative Analysis of Distributed Linear Solvers under Data Heterogeneity}

\author{\authorblockN{Boris Velasevic*}
\authorblockA{MIT}
\and
\authorblockN{Rohit Parasnis*}
\authorblockA{MIT}
\and
\authorblockN{Christopher G. Brinton}
\authorblockA{Purdue}
\and
\authorblockN{Navid Azizan}
\authorblockA{MIT}
\thanks{This work was supported in part by MathWorks, by the MIT-IBM Watson AI Lab, and by NSF grant CNS-2146171.}
}

\maketitle

\maketitle

\begin{abstract}
We consider the fundamental problem of solving a large-scale system of linear equations in a distributed/federated manner. The taskmaster solves the system with the help of a set of machines, each possessing a subset of the equations. While there exist several approaches for solving this problem, missing is a rigorous comparison between the convergence rates of two different classes of algorithms, namely, the projection-based methods and the optimization-based ones. In this paper, we provide a comprehensive analysis and comparison of these two classes of algorithms, with a particular focus on the most efficient method from each class, i.e., the recently proposed Accelerated Projection-Based Consensus (APC)~\cite{azizan2019distributed} and the Distributed Heavy-Ball Method (D-HBM). To this end, we first introduce a novel, geometric notion of data heterogeneity called \emph{angular heterogeneity} and discuss its generality. 
Using this notion, we characterize and compare the optimal convergence rates of several well-known algorithms and capture the effects of the number of machines, the number of equations, and both cross-machine and local data heterogeneity on these rates. Our analysis not only establishes the superiority of APC for realistic scenarios where there is a large data heterogeneity, but also provides several insights into the effect of angular heterogeneity on the efficiencies of the studied methods. Additionally, we leverage existing results in numerical linear algebra to obtain distributed algorithms for the efficient computation of the proposed angular heterogeneity metrics. Lastly, as a by-product of our investigation, we obtain a tight bound on the condition number of an arbitrary square matrix in terms of the Euclidean norms of its rows and the angles between them. Our theoretical findings are validated through numerical analyses, confirming the superior performance of APC in typical real-world settings and providing a deeper understanding of the effects of angular heterogeneity on convergence rates.
\end{abstract}



In recent years, the proliferation of large-scale computational tasks has necessitated reliance on distributed computations across multiple processing cores or machines. This shift is driven by the complexity of the problems, which demand extensive computation and memory, or by the distributed nature of the datasets~\cite{koc1994exact,xiao2005scheme,santoro2006design,verbraeken2020survey,xiao2020survey,alimohammadi2024adapting}. Distributed implementations, in contrast to fully centralized algorithms, offer more efficient solutions with reduced computational and memory constraints and do not require moving the data to a central entity.

Of significant interest among these are distributed approaches to the problem of solving a large-scale system of linear equations. This is among the most fundamental problems in distributed computation because systems of linear equations form the backbone of innumerable algorithms in engineering and the sciences.
As a result, unsurprisingly, there exist multiple approaches for solving linear equations distributively. These can be broadly categorized as (a) approaches based on distributed optimization and (b) those specifically aimed at solving systems of linear equations. 

Algorithms belonging to the former category rely on the observation that solving a linear system can be expressed as an optimization problem (a linear regression), in which the loss function is separable in the data (i.e., the coefficients) but not in the variables \cite{azizan2019distributed}. Therefore, this category includes popular gradient-based methods such as Distributed Gradient Descent (DGD) and its variants~\cite{zinkevich2010parallelized,recht2011hogwild,yuan2016convergence}, Distributed Nesterov's Accelerated Gradient Descent (D-NAG)~\cite{nesterov1983method}, Distributed Heavy Ball-Method (D-HBM)~\cite{polyak1964some}, and some recently proposed algorithms such as Iteratively Pre-conditioned Gradient-Descent (IPG)~\cite{chakrabarti2021robustness}. Besides, the Alternating Direction Method of Multipliers (ADMM)~\cite{boyd2011distributed}, a well-known algorithm that happens to be significantly slower than the others for this problem, also falls into this category. 

As for the second category, i.e., approaches that are specific to solving linear systems, the most popular is the block Cimmino method~\cite{duff2015augmented,sloboda1991projection,arioli1992block}, which is a block row-projection method \cite{bramley1992row} and is essentially a distributed version of the Karczmarz method~\cite{karczmarz1937angenaherte}. In addition, there exist some recent approaches such as those proposed in~\cite{alaviani2020distributed,huang2022scalable,mou},  and Accelerated Projection-based Consensus (APC)~\cite{azizan2019distributed}.   

Among all these methods from either category, of particular interest to us are algorithms whose convergence rates are linear (i.e., the error decays exponentially in time) and whose computation and communication complexities are linear in the number of variables. These include DGD, D-NAG, D-HBM, B-Cimmino, APC, and the projection-based distributed solver proposed in~\cite{mou}. It has been shown analytically in~\cite{azizan2019distributed} that D-HBM has a faster rate of convergence 
than the other two gradient-based methods, i.e., DGD and D-NAG, and that APC converges faster than the other two projection-based methods, namely the block Cimmino method and the algorithm of~\cite{mou}.

However, which among the aforementioned methods is the fastest remains hitherto unknown, because it has so far proven challenging to characterize the relationships between the optimal convergence rates of the gradient-based approaches with those of the projection-based approaches. This is because the optimal convergence rates of the latter class of methods depend on the properties of the local data spaces (spaces associated with the local subsystems of equations stored at the machines), whereas the optimal convergence rates of the former class depend only on the condition number of the global system of equations. 

To circumvent this complexity, we propose a novel approach to capture the effect of the partitioning of the equations among the machines on the convergence rates of the aforementioned gradient-based and projection-based methods. Our analysis is based on a novel notion of data heterogeneity called \textit{angular heterogeneity}, which is inspired by similar notions used in the context of federated learning to quantify the diversity of local data across machines. This concept enables us to compare algorithms from both classes of interest and to show that APC converges faster than all the other methods when the is significant cross-machine angular heterogeneity, as is often the case in real-world scenarios. %

\subsection{Summary of Contributions}
Our contributions are summarized below.
\begin{enumerate}
    \item \textbf{\textit{A Geometric Notion of Data Heterogeneity.}} We propose a concept of angular heterogeneity, which extends the notion of cosine similarity/principal angles to certain vector spaces associated with the distribution of global data among the machines. Furthermore, we show that angular heterogeneity can be computed using efficient algorithms. The generality of this concept, its scale-invariant nature, and its tractability make it a potentially useful measure of data heterogeneity in distributed learning.
    \item \textbf{\textit{Convergence Rate Analysis.}}  We derive bounds on the optimal convergence rates of (a) three gradient descent-based methods, namely D-NAG, D-HBM, and DGD, and (b) three projection-based methods, namely the block Cimmino method, APC, and the algorithm proposed in~\cite{mou}. These bounds serve two different purposes: one of them captures the dependence of the optimal convergence rate on the total number of equations in the global system, whereas the other captures its dependence on the number of machines. Moreover, we show that the greater the level of cross-machine angular heterogeneity, the better the optimal convergence rates of APC and the block Cimmino method. As by-products of this analysis, we obtain bounds on the convergence rates of D-NAG, DGD, the block Cimmino method, and the algorithm of~\cite{mou}.
    \item \textbf{\textit{Illustrative Examples.}} We provide several examples to illustrate the implications of our main results for different levels of local and cross-machine angular heterogeneity.
    \item \textbf{\textit{Experimental Validation.}} We validate our theoretical results numerically and demonstrate how the optimal convergence rate of the most efficient method (APC) is bounded with respect to the dimension of the data. Our experiments also shed light on the dependence of this quantity on the number of machines. In addition, we investigate the behavior of the cross-machine angular heterogeneity in a stochastic setting involving normally distributed equation coefficients, and we show that this heterogeneity approaches its maximum value as the number of equations goes to infinity.
\end{enumerate}


\textit{Notation:} We let $\R$ denote the set of real numbers. Given two natural numbers $n$ and $m$, we let $[n]:=\{1,2,\ldots, n\}$  denote the set of the first $n$ natural numbers,  $\R^n$ denotes the space of all $n$-dimensional column vectors with real entries, and $\R^{m\times n}$ to denotes the space of real-valued matrices with $m$ rows and $n$ columns. Besides, $I_{n\times n}\in\R^{n\times n}$ denotes the identity matrix and ${O}_{n\times n}\in\R^{n\times n}$ denotes the matrix with every entry equal to 0, where the subscripts are dropped if they are clear from the context.  for each $k\in[n]$, we let $e_k$ denote the $k$-th canonical basis vector of $\R^n$. 

For a vector $v\in\R^n$, we let $v_k$ denote the $k$-th entry of $v$ for each $k\in [n]$, and $\norm{v}:=\sqrt{\sum_{k=1}^n v_k^2}$ denotes the Euclidean norm of $v$.  Given a matrix $M\in\R^{m\times n}$, we let $M^\top$ denote the transpose of $M$, we let $\norm{M}:=\sup_{\norm{z}=1}\norm{Mz}$ denote the spectral matrix norm of $M$, we let $\rho(M)$ denotes the spectral radius (the absolute value of the eigenvalue with the greatest absolute value) of $M$, and we let $\max(|M|)$ denote the element of $M$ with the largest absolute value. In addition, for a square matrix $M\in\R^{n\times n}$, we let $\lambda_{\max}(M)$ and $\lambda_{\min}(M)$ denote, respectively, the maximum and the minimum eigenvalues of $M$. Furthermore, if $M$ is invertible, then $\kappa(M)$ denotes the condition number of $M$ as defined with respect  to the spectral norm, i.e., $\kappa(M):= \norm{M}\cdot\norm{M^{-1}}$. It is well-known~\cite{meyer2000matrix} that $\kappa(M)$ equals the ratio of the greatest and the smallest singular values of $M$. All matrix inequalities hold entry-wise.

Two linear subspaces $\mathcal U$ and $\mathcal V$ are said to be orthogonal if $u^\top v=0$ for all $u\in \mathcal U$ and all $v\in\mathcal V$, to express which we write $\mathcal U\perp \mathcal V$. 

Finally, we let $P(E)$ denote the probability of an event $E$, we let $\mathcal N(\mu,\sigma^2)$ denote  the Gaussian  distribution with mean $\mu\in\R$ and variance $\sigma^2>0$, and we let $\mathbb S^{n-1}\subset \R^n$ denote the $n$-dimensional unit sphere.
\section{Problem Formulation}
Our goal is to compare the efficiencies of two classes of distributed linear system solvers, i.e., gradient-based algorithms and projection-based algorithms, by relating the optimal convergence rates of these methods to the degree of data heterogeneity in the network. We introduce the problem setup, describe the most efficient algorithms from each class, and reproduce some known results on their optimal convergence rates. We then introduce a few geometric notions of data heterogeneity, using which we formulate our problem precisely.

\subsection{The Setup}

Consider a large-scale system of linear equations
\begin{align}\label{eq:global}
    Ax=b,
\end{align}
where $A\in\R^{N\times n}$, $x\in\R^n$, and $b\in\R^N$. Throughout this paper, we assume $N=n$ for the sake of simplicity. In other words, the coefficient matrix $A$ is assumed to be a square matrix, as we believe that the results can be extended to more general cases using very similar arguments and proof techniques. In addition, we assume $A$ to be invertible, which implies that~\eqref{eq:global} has a unique solution $x^*$ (so that $Ax^*=b$).

To solve the $n$ equations specified by~\eqref{eq:global} distributively over a network of $m\le n$ edge machines, the central server partitions the global system~\eqref{eq:global} into $m$ linear subsystems as 
\begin{align}\label{eq:distributed_main}
\begin{bmatrix}
A_1 \\
\hdashline
A_2 \\
\hdashline
\vdots \\
\hdashline
A_m
\end{bmatrix}
x
=
\begin{bmatrix}
b_1 \\
\hdashline
b_2 \\
\hdashline
\vdots \\
\hdashline
b_m
\end{bmatrix}
\end{align}
where for each $i\in[m]$, the $i$-th subsystem $A_ix=b_i$ (equivalently, the \textit{local data} pair $[A_i,b_i]$ where $A_i\in\R^{p_i\times n}$ and $b_i\in\R^{p_i}$) consists of $p_i$ equations and is accessible only to machine $i\in [m]$. In some applications, these local data may already be available at the respective machines without the server distributing them. A schematic representation of this setup is shown in Figure~\ref{fig:problem_setup}, which is adapted from~\cite[Figure 1]{azizan2019distributed}.
\begin{figure}[tph]
    \centering
    \includegraphics[width=0.8\columnwidth]{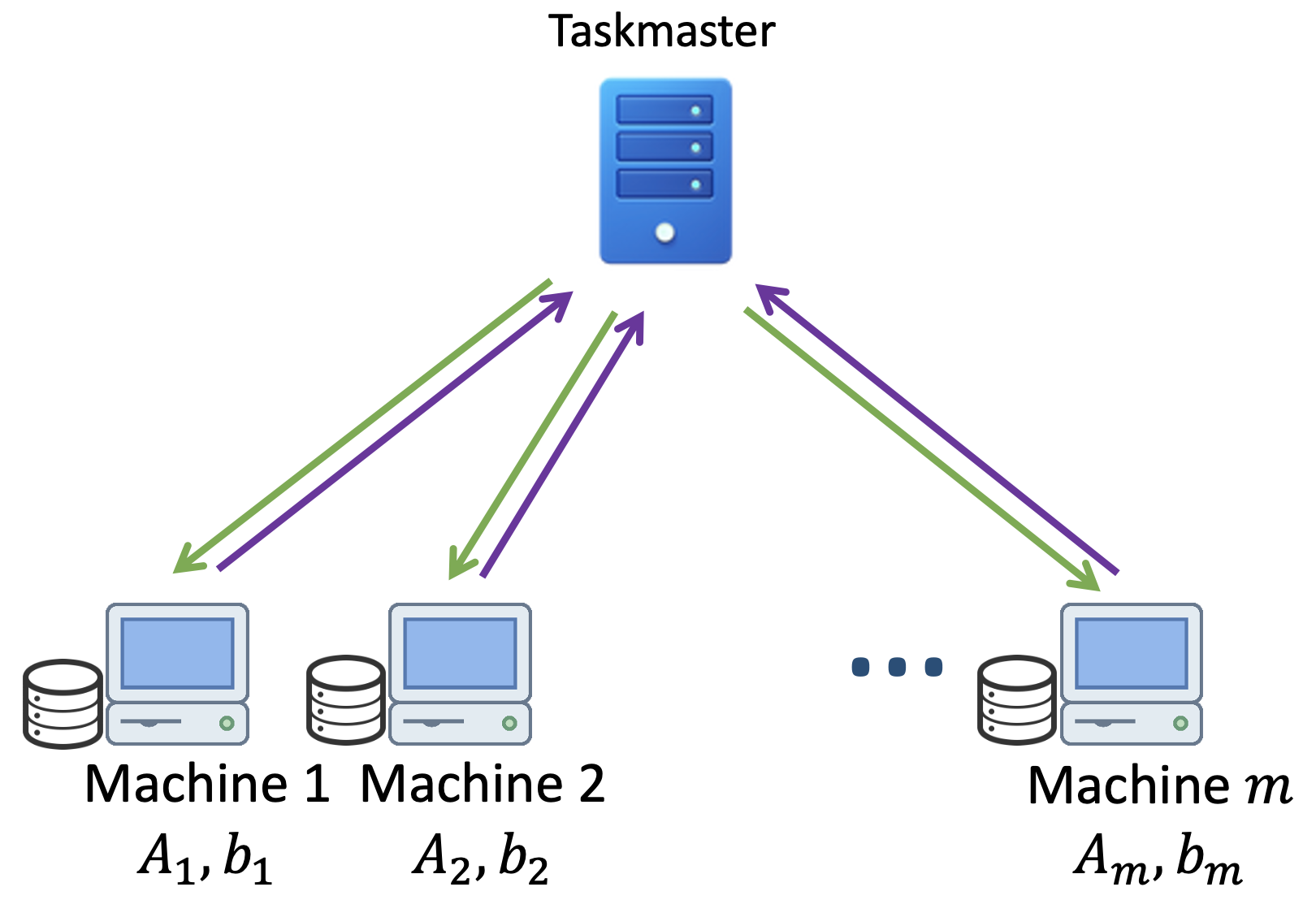}
    \captionsetup{justification=centering}
    \caption{Schematic representation of the taskmaster and the $m$ machines/cores.
Each machine $i$ has only a subset of the equations, i.e., $[A_i, b_i]$.}
    \label{fig:problem_setup}
\end{figure}

Unlike the analysis in~\cite{azizan2019distributed}, we do not impose any restrictions on the number of equations in any of these local subsystems. Note, however, that if $m\gg 1$, then each of these local systems is likely to be highly undetermined with infinitely many solutions because the number of local equations is likely much smaller than $n$.

We now describe the algorithms of interest. We define the convergence rate of an algorithm. 
\begin{definition}
    An algorithm is said to have a convergence rate $\rho > 0$ if the convergence error vanishes at least as fast as $\rho^t$ vanishes in the limit as $t \to \infty$. The convergence rate is optimal if $\rho$ is as small as possible. 
\end{definition}
We focus particularly on APC, the projection-based method with the fastest convergence behavior, and D-HBM, the gradient-based method with the fastest convergence behavior. 

\subsection{Accelerated Projection-Based Consensus}\label{subsec:apc}
Originally proposed in~\cite{azizan2019distributed}, accelerated projection-based consensus (APC) is  a distributed linear system solver in which every iteration consists of a local projection-based consensus step followed by a global averaging step, both of which incorporate momentum terms that accelerate the convergence of the algorithm to the global solution of~\eqref{eq:global}. We now describe  APC  in detail. 

\subsubsection{The Iterates} At all times $t$, every device $i\in [m]$ stores its estimate $x_i(t)$ of the global solution $x^*$. Likewise, the server stores its estimate $\bar x(t)$ of $x^*$. 

\subsubsection{ Initialization} The estimates $\{\bar x(t)\}\cup\{x_i(t):i\in[m]\}$ are initialized as follows. Each machine $i$ sets its initial estimate $x_i(0)$ of $x^*$ to one of the infinitely many solutions of $A_i x = b_i$, which can be easily computed in $O(p_i^3)$ steps. The machine then transmits $x_i(0)$ to the server, which then computes its own initial estimate of $x^*$ as $\bar x(0):=\frac{1}{m}\sum_{i=1}^m x_i(0)$. In subsequent iterations, $x_i(t)$ and $\bar x(t)$ are updated as described below.

\subsubsection{Projection-Based Consensus Step} In iteration $t+1$, every machine $i$ receives $\bar x(t)$, the server's most recent estimate of $x^*$. The machine then updates its own estimate $x_i(t)$ by performing the following projection-based consensus step:
\begin{align}\label{eq:projection_definition}
    x_i(t+1) = x_i(t) + \gamma P_i(\bar x(t) - x_i(t)),
\end{align}
where $\gamma\ge1$ is a fixed {momentum} and the matrix ${P_i:=I-A_i^\top(A_iA_i^\top)^{-1}A_i}$ is the orthogonal projector onto the nullspace of $A_i$. In other words, the machine takes an accelerated step in a direction that is orthogonal to the coefficient vectors of its local system of equations (i.e., the rows of $A_i$). This ensures that ${A_ix_i(t+1) = A_i x_i(t)=A_ix_i(0)= b_i}$, i.e., $x_i(t)$ never leaves the local solution space of machine $i$. 

\subsubsection{Global Averaging Step} The next step in iteration $t+1$ is the memory-augmented global averaging step performed by the server as 
\begin{align}\label{eq:glob_avg}
    \bar x(t+1) = \frac{\eta}{m} \sum_{i=1}^m x_i(t+1) + (1-\eta)\bar x(t),
\end{align}
where $\eta\in\R$ is a fixed momentum and $(1-\eta)\bar x(t)$ is the memory term. 

In sum,~\eqref{eq:projection_definition} and~\eqref{eq:glob_avg} are the update steps that define APC.

\subsubsection*{Optimal Convergence Rate}
It was shown in~\cite{azizan2019distributed} that the optimal convergence rate of APC depends on the arithmetic mean of the projectors $\{P_i\}_{i=1}^m$. More precisely, let
\begin{align}\label{eq:def_S }
    S & :=\sum_{i=1}^m (I - P_i)\\
    & = \sum_{i=1}^m A_i^\top(A_iA_i^\top)^{-1}A_i\notag.
\end{align}
 We then know from~\cite[Theorem 1]{azizan2019distributed}\footnote{Note that $S=mX $ for the matrix $X $ defined in~\cite[Eq. (4)]{azizan2019distributed}. } that there exist values of $\gamma$ and $\eta$ that result in the optimal convergence rate of APC being given by 
\begin{align}\label{eq:rho_APC}
    \rhoapc:=1-\frac{2}{\sqrt{\kappa\left(S\right)}+1}.
\end{align}
\def\rhobc{\rho_{\text{BC} } }
\def\rhomou{\rho_{\text{MLM} } }
\subsubsection*{Other Projection-Based Methods} It was shown in~\cite{azizan2019distributed} that the optimal convergence rate of the block Cimmino (BC) is given by ${\rhobc:=1 - \frac{2}{\kappa(S)+1}}$. Besides, the   algorithm proposed in~\cite{mou} has an optimal convergence rate given by ${\rhomou:= 1 - \frac{1}{m}\lambda_{\min}(S)}$, where \textit{MLM} stands for Mou, Liu, and Morse. As shown in~\cite{azizan2019distributed}, we have ${\rhoapc\le\rhobc\le\rhomou}$.

\subsection{Distributed Heavy-Ball Method}\label{subsec:dhbm}

Introduced in~\cite{polyak1964some}, the Distributed Heavy-Ball Method (D-HBM) is a distributed linear system solver that performs the following momentum-enhanced updates in each iteration $t$:
\begin{align}
    z(t+1)&=\beta z(t)+\sum_{i=1}^m A_i^\top\left(A_i x(t)-b_i\right)\\
    x(t+1)&=x(t)-\alpha z(t+1),
\end{align}
Here, {$\alpha,\beta\in\R$ are the momentum and step size parameters, respectively,} and $A_i^\top(A_ix(t)-b_i)$ is the gradient of the  function ${f_i:\R^n\to [0,\infty)}$ defined by $f_i(y)=\norm{A_iy-b_i}^2$. This gradient is evaluated at the global estimate $x(t)$ by machine $i$. Thus, each iteration of D-HBM consists of a memory-augmented gradient update followed by an accelerated gradient descent step. 

\subsubsection*{Optimal Convergence Rate} We know from~\cite{lessard2016analysis} that, for the optimal choices of $\alpha$ and $\beta$, the global estimate $x(t)$ in D-HBM converges to $x^*$ as fast as $\rhodhbm^t $ vanishes, where
\begin{align}\label{eq:rho_dhbm}
    \rhodhbm :=1-\frac{2}{\sqrt{\kappa(A^\top A)}+1}.
\end{align}

\def\rhonag{\rho_{\text{NAG}}}
\subsubsection*{Other Gradient-Based Methods} It was shown in~\cite{azizan2019distributed} that DGD has an optimal convergence rate given by  ${\rhodgd:=1 - \frac{2}{\kappa(A^\top A)}}$, and it was shown in~\cite{lessard2016analysis} that the optimal convergence rate of D-NAG is ${\rhonag:=1 - \frac{ 2 }{\sqrt{ 3\kappa(A^\top A) +1  
 }  }}$. Thus, ${\rhodhbm\le \rhonag\le \rhodgd}$.

\section{Geometric Notions of Data Heterogeneity}
In this section, we develop two geometric notions of data heterogeneity. The first notion is based on the following concepts of local data spaces and cosine similarities/principal angles between local data.

\begin{definition} [\textbf{Local Data Spaces}]
    Given a machine ${i\in[m]}$, the row space of $A_i$, {denoted by $\row(A_i)$}, is called the \textit{local data space} of machine $i$.
\end{definition}

Note that the linear span of the coefficient vectors (the rows of $A$) stored at machine $i$ equals the span of the rows of $A_i$, which is precisely the local data space of the machine.

\begin{definition} [\textbf{Principal Angles}] \label{def:cosine}
The principal angles $\alpha_1, ..., \alpha_p$ between $\row(A_i)$ and $\row(A_j)$ are recursively defined by 
\begin{gather}
\cos \alpha_k =  \max_{u\in \row(A_i), v\in \row(A_j)} \frac{u^\top v}{\|u\|\|v\|}  = \frac{u_k^\top v_k}{\|u_k\|\|v_k\|} 
\\  \nonumber \text{subject to: } u_k^\top u_j = 0, v_k^\top v_j = 0, j \in [k - 1]
\end{gather}
where $u_k, v_k$ are the vectors that achieve the maximum.
\end{definition}

It can be easily checked that $\cos \alpha_k$ is a non-increasing sequence. 

\begin{definition} [\textbf{Cosine Similarity/Smallest Principal Angle}] \label{def:cosine} For any two distinct machines $i,j\in[m]$,  let $\theta_{ij}$ denote the minimum angle between their local data spaces, i.e.,
    \begin{align}\label{eq:cos}
        \theta_{ij}:=\cos^{-1}\left(\max_{u\in \row(A_i), v\in \row(A_j)}\left(\frac{| u^\top v| }{\norm{u}\norm{v}}\right)\right).
    \end{align}
    Then $\cos\theta_{ij}$ is the \emph{cosine similarity} between the local data of machines $i$ and $j$, and $\theta_{ij}$ is the \emph{smallest principal angle} between their local data spaces.
\end{definition}

\begin{remark}
Making the local data spaces $\row(A_i)$ and $\row(A_j)$ uni-dimensional in~\eqref{eq:cos} and setting $\norm{u}=\norm{v}=1$  would result in an expression for the cosine similarity between two unit-norm vectors.  Definition~\ref{def:cosine}, therefore, generalizes the standard definition of cosine similarity.
\end{remark}

On the basis of  Definition~\ref{def:cosine}, we now define cross-machine angular heterogeneity.

\begin{definition} [\textbf{Cross-machine Angular Heterogeneity}]\label{def:cross-machine}
    The inverse cosine of the maximum of all pairwise cosine similarities, i.e.,
    $$
        \theta_{\text{H}}:=\cos^{-1}\left( \max_{1\le i< j\le m} \cos{\theta_{ij}} \right),
    $$
    is  the \emph{cross-machine angular heterogeneity} of the network.
\end{definition}

Note that we  have {$0\le\theta_{\text{H}}\le \theta_{ij}\le \frac{\pi}{2}$} for all pairs of machines $i,j\in [m]$. Also, note that the more the local data spaces of the machines diverge from each other in the angular sense, the greater is the cross-machine angular heterogeneity of the network. At the same time, however, a salient feature of this notion of data heterogeneity is that its value is invariant with respect to any scaling applied to the rows of $A$. This is especially useful in the context of solving a linear system, because the true solution of such a system is unaffected by scaling any subset of the equations.

\begin{remark}
    To understand the generality of Definition~\ref{def:cross-machine}, we revisit~\eqref{eq:global}. Since $A$ is assumed to be invertible,~\eqref{eq:global} can be expressed as the unconstrained minimization of the loss function $f:\R^n\to\R$ defined by $f(x)=\norm{Ax-b}^2$. When implemented distributively, this can be interpreted as the problem of learning a machine learning model $x$ with the rows of $A_i$ denoting the feature vectors of the data samples stored at machine $i$ and the corresponding entries of $b$ denoting the data labels. This suggests that Definitions~\ref{def:cosine} and~\ref{def:cross-machine} can be adapted for distributed learning scenarios by replacing $\{\row(A_i)\}_{i=1}^m$ in these definitions with the linear spans of the feature vectors stored at the respective machines. 
\end{remark}

Besides cross-machine heterogeneity, we define another geometric notion of data heterogeneity to quantify the total angular spread of the local data within each machine.
\begin{definition} [\textbf{Local Angular Heterogeneity}]
    The local angular heterogeneity $\phi_i$ of machine $i$ is the minimum angle between any two of its local coefficient vectors (i.e., the rows of $A_i$). Hence, 
    $$
        \phi_i :=  \cos^{-1}\left(\max_{1\le k<\ell\le p_i }\left(\frac{ |e_k^\top A_iA_i^\top e_\ell|}{\norm{A_i^\top e_k} \norm{A_i^\top e_\ell }  } \right)\right).
    $$
\end{definition}

Our goal now is (a) to analyze the effects of both local and non-local angular heterogeneity on the convergence rates $\rhoapc$ and $\rhodhbm$, and (b) to use the results of our analysis to compare the efficiencies of APC and D-HBM.

\def\thetah{\theta_{\text{H}}}
\section{Computability of $\thetah$}

Before discussing our main results, we point the reader to known results that enable the computation of $\thetah$ for an arbitrary collection of local data spaces.

It is not immediately clear from the definition of $\thetah$ that computing the value of $\thetah$ is tractable. However, principal angles have been studied in the literature, and the following theorem proven in \cite{computingthetah} allows for a simple polynomial-time algorithm for computing $\thetah$. 
\begin{theorem}[\textbf{Principal Angle Computation~\cite{computingthetah}}]
 Recall that a matrix $M \in \mathbb{R}^{n \times p}$ can be QR-decomposed as $M = Q_MR_M$ such that the columns of $Q_M$ form an orthonormal basis for the column space of $M$. Let $Q_i$ and $Q_j$ be the orthonormal bases obtained from the QR-decompositions of $A_i^\top$ and $A_j^\top$, respectively. Then, $\cos(\theta_{ij})$ is the largest singular value of $Q_i^\top Q_j$. 
\end{theorem}

Therefore, to compute $\thetah$, we perform Algorithm ~\ref{eg:comp} below.
\begin{algorithm}{\label{algorithm1}}
\caption{Computaion of $\thetah$}\label{eg:comp}
\begin{algorithmic}
\State{$\thetah \leftarrow \frac{\pi}{2}$}
\For{$1 \leq i \leq m$}
    \State{$(Q_i, R_i) \leftarrow $ {\tt QR}$(A_i^\top)$}
\EndFor
\For{$1 \leq i < j \leq m$}
    \State{$(U_{ij}, \Sigma_{ij}, V^\top_{ij}) \leftarrow $  {\tt SVD} $(Q_i^\top Q_j)$}
    \State{$\thetah \leftarrow \min(\thetah, \cos^{-1}(\max(\Sigma_{ij})))$}
\EndFor
\State{return $\thetah$}
\end{algorithmic}
\end{algorithm}

In this algorithm, {\tt QR}  is a subroutine that performs QR factorization and returns a tuple $(Q, R)$, and {\tt SVD}  is a subroutine that performs singular value decomposition and returns a tuple $(U, \Sigma, V^\top)$. If we assume that each $A_i$ has $O(p)$ rows and that $m = O(\frac{n}{p})$, we can analyze the complexity of the algorithm as follows.  Note that QR decomposition requires $O(np^2)$  computations, which means that the complexity of the first \textit{for} loop is $O(n^2p)$.  Similarly, each matrix multiplication in the second loop has a complexity of $O(np^2)$, and the SVD decomposition has a complexity of $O(p^3)$, meaning that the complexity of the second \textit{for }loop is $O(n^3)$. Therefore, the complexity of the algorithm is $O(n^3)$, which can be prohibitively high for large values of $n$. 

Nevertheless, we can leverage the fact that Algorithm 1 lends itself naturally to the distributed computation setting described in this paper, wherein it requires fewer computations. This gives rise to the following algorithm (Algorithm~\ref{eg:comp1}).

\begin{algorithm}{\label{algorithm1}}
\caption{Distributed computation of $\thetah$}\label{eg:comp1}
\begin{algorithmic}
\For{each machine $i \in [m]$ in parallel}
    \State{$(Q_i, R_i) \leftarrow $ {\tt QR}$(A_i^\top)$}
    \State{Send $Q_i$ to the taskmaster}
\EndFor
\State{The taskmaster provides machine $i$ with $Q_1, ..., Q_m$}
\For{each machine $i \in [m]$ in parallel}
    \State{$\theta_{i} \leftarrow \frac{\pi}{2}$}
    \For{$j \in [ m], j \neq i$}
        \State{$(U_j, \Sigma_{j}, V^\top_j) \leftarrow $  {\tt SVD}$(Q_i^\top Q_j)$}
        \State{$\theta_{i} \leftarrow \min(\theta_{i}, \cos^{-1}(\max(\Sigma_{j})))$} 
    \EndFor
    \State{Send $\theta_i$ to the taskmaster}
\EndFor
\State{The taskmaster computes $\thetah \leftarrow \min\{\theta_1, ..., \theta_m\}$}
\State{return $\thetah$}
\end{algorithmic}
\end{algorithm}
In this setting, each machine first performs $O(np^2)$ computations to run Gram--Schmidt orthogonalization. Furthermore, each matrix multiplication requires $O(np^2)$ computations and each SVD requires $O(p^3)$ steps, meaning that the computational cost of computing $\theta_i$ is $O(n^2p)$. Finally, the taskmaster picks the smallest $\theta_{i}$ in linear time. As the computations of $\theta_i$ are done in parallel, the complexity of this algorithm is $O(n^2p)$.
A more detailed discussion on the numerical computation of principal angles can be found in \cite{computingthetah}.

\section{Main Results}\label{sec:main_results}
To shed light on how data heterogeneity affects the convergence rates $\rhoapc$ and $\rhodhbm$, we first bound the condition numbers $\kappa(S )$ and $\kappa(A^\top A)$ (with $S $ as in~\eqref{eq:def_S }) in terms of the angular heterogeneity measures {$\theta_{\text{H}}$ and $\{\phi_i\}_{i=1}^m$}. We then use~\eqref{eq:rho_APC} and~\eqref{eq:rho_dhbm} to compare $\rhoapc$ with $\rhodhbm$.

Our first result applies to the extreme case of maximum cross-machine angular heterogeneity ($\theta_{\text{H}}=\frac{\pi}{2}$). Since this is realized precisely when all the local data spaces $\{\row(A_i)\}_{i=1}^m$ are mutually orthogonal, we call this case \textit{total orthogonality}.
\begin{proposition} [\textbf{Total Orthogonality}]\label{prop:total_orth}
Suppose $\theta_{\text{H}}=\frac{\pi}{2}$, i.e., suppose $\row(A_i)\perp \row(A_j)$ for all $i,j\in[m]$. Then we have $\kappa(S)=1$, and hence, $\rhoapc=0$. Equivalently, APC converges to $x^*$ in a single step.
\end{proposition}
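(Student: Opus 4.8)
The plan is to show that under total orthogonality the matrix $S$ defined in~\eqref{eq:def_S } is in fact the identity matrix $I_{n\times n}$; then $\kappa(S)=1$ is immediate, and substituting into~\eqref{eq:rho_APC} yields $\rhoapc=1-\frac{2}{\sqrt{1}+1}=0$, i.e., the APC error is annihilated in a single iteration.

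First I would recall the structural fact already noted in Section~\ref{subsec:apc} that $I-P_i=A_i^\top(A_iA_i^\top)^{-1}A_i$ is the orthogonal projector onto $\row(A_i)$, the orthogonal complement of the nullspace of $A_i$; here $A_iA_i^\top$ is invertible because the rows of $A_i$ form part of the row set of the invertible matrix $A$ and are therefore linearly independent. Next, since $A$ is invertible its rows span $\R^n$, and as the rows of $A$ are exactly those of $A_1,\dots,A_m$ taken together, the local data spaces satisfy $\sum_{i=1}^m\row(A_i)=\row(A)=\R^n$. The crux is then to evaluate $S$ on this spanning family: fix $i$ and take any $u\in\row(A_i)$. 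For $j\ne i$, the hypothesis $\row(A_i)\perp\row(A_j)$ forces $(I-P_j)u=\allzero$, since $(I-P_j)u$ is the orthogonal projection of $u$ onto $\row(A_j)$; and $(I-P_i)u=u$ because $u$ already lies in $\row(A_i)$. Hence $Su=\sum_{j=1}^m(I-P_j)u=u$. As this holds for every $u$ in each $\row(A_i)$ and these subspaces jointly span $\R^n$, the linear map $S$ agrees with the identity on a spanning set, so $S=I_{n\times n}$, giving $\kappa(S)=1$ and $\rhoapc=0$.

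I do not expect a genuine obstacle here; the argument is essentially the dimension-free fact that the sum of the orthogonal projectors onto mutually orthogonal subspaces whose span is the whole space is the identity. The only points needing a little care are the routine justification that each $I-P_i$ really is the orthogonal projector onto $\row(A_i)$ and the bookkeeping that $\bigcup_{i=1}^m\row(A_i)$ exhausts $\R^n$. An equivalent, equally short route is to observe that orthogonality of the ranges gives $(I-P_i)(I-P_j)=O$ for $i\ne j$, whence $S^2=\sum_{i=1}^m(I-P_i)^2=\sum_{i=1}^m(I-P_i)=S$; thus $S$ is a symmetric idempotent, i.e., an orthogonal projector, and its range contains every $\row(A_i)$ and hence all of $\R^n$, again forcing $S=I_{n\times n}$.
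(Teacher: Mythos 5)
Your proof is correct, but it reaches $S=I$ by a different route than the paper. The paper's argument is a matrix computation: it expands the product $S A^\top A$, observes that total orthogonality forces $A_iA_j^\top=O$ for $i\ne j$ (since every entry of $A_iA_j^\top$ is an inner product of a row of $A_i$ with a row of $A_j$), concludes $SA^\top A=A^\top A$, and then cancels the invertible factor $A^\top A$ to get $S=I$. You instead argue at the level of the operators themselves: each $I-P_i$ is the orthogonal projector onto $\row(A_i)$, so for $u\in\row(A_i)$ you get $(I-P_i)u=u$ and $(I-P_j)u=\allzero$ for $j\ne i$, hence $Su=u$ on a family of subspaces that spans $\R^n$ (here is where you, like the paper, use invertibility of $A$), forcing $S=I$; your idempotency variant $S^2=S$ is the same idea packaged differently. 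Both proofs are short and complete; the paper's buys a purely mechanical verification from the explicit formula for $P_i$, while yours isolates the underlying dimension-free principle (the sum of orthogonal projectors onto mutually orthogonal subspaces spanning the whole space is the identity) and would generalize more gracefully, e.g.\ it shows that without invertibility of $A$ the matrix $S$ is still the orthogonal projector onto $\row(A_1)+\cdots+\row(A_m)$, which is the natural statement in underdetermined extensions. The only bookkeeping points to keep (and you do note them) are that $A_iA_i^\top$ is invertible because the rows of $A_i$ are rows of the invertible $A$, and that the hypothesis is of course orthogonality for $i\ne j$.
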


\begin{proof}
Note that we have $S = \sum_{i = 1}^m A_i^\top (A_iA_i^\top )^{-1}A_i$ by the definitions of $S$ and $P_i$. This implies that
\begin{align}\label{eq:total_orth}
    S A^\top A &= \bigg(\sum_{i = 1}^m A_i^\top (A_iA_i^\top )^{-1}A_i\bigg) \bigg(\sum_{i = 1}^m A_i^\top A_i\bigg)\cr 
    &= \sum_{i = 1}^m A_i^\top A_i + \sum_{i, j, i \neq j}A_i^\top (A_iA_i^\top )^{-1}A_iA_j^\top A_j\cr
    & =  A^\top A + \sum_{i, j, i \neq j}A_i^\top (A_iA_i^\top )^{-1}(A_iA_j^\top )A_j.
\end{align}
Now, {let $i,j\in[m]$ be generic machine indices}, let $a_{i,k}^\top \in \R^{1\times n}$ denote the $k$-th row of $A_i$ for all $k\in [p_i]$, and observe that ${a_{i,k}\in \row(A_i)}$ for all $k\in [p_i]$. Therefore, total orthogonality implies that $(A_iA_j^\top )_{kl} = (a_{i,k})^\top a_{j,\ell}= 0$ for all $k\in [p_i]$ and $\ell\in[p_j]$. Hence $A_iA_j^\top  =  O$ for all $i,j\in[m]$. It now  follows from~\eqref{eq:total_orth} that $S(A^\top A) = A^\top A$. Since $A$ is assumed to be invertible, so is $A^\top A$, and hence, $S = I$. Thus, $\kappa(S)=1$. In light of~\eqref{eq:rho_APC}, this means $\rhoapc=0$, as required. 
\end{proof}
The above proposition is powerful, as it allows us to construct examples where APC converges in a single step, while D-HBM converges arbitrarily slowly. These examples are presented later in Section~\ref{sec:explicit}. 

We now extend Proposition~\ref{prop:total_orth} to results that apply to a range of values of the cross-machine heterogeneity $\theta_{\text{H}}$. This results in two bounds on $\kappa(S)$ in terms of $\theta_{\text{H}}$, one of which is independent of the number of machines $m$ and the other is independent of the total number of equations $n$.
\color{black}
\begin{theorem}\label{thm:kappa_S1}
Consider the distributed system~\eqref{eq:distributed_main},  and suppose there exists a machine with $p$ equations. Additionally, suppose $\theta_{\text{H}}$ satisfies $1 - \sqrt{p(n-p)\cos\theta_{\text{H}}} - (n - p)\cos\theta_{\text{H}} > 0$. Then, regardless of the number of machines $m$, we have
$$
    \kappa(S)\le \frac{1 + \sqrt{p(n-p)\cos\theta_{\text{H}}} + (n - p)\cos\theta_{\text{H}} }{1 - \sqrt{p(n-p)\cos\theta_{\text{H}}} - (n - p)\cos\theta_{\text{H}}}.
$$

\end{theorem}
We provide a sketch of the proof and relegate the full proof to the appendix.
\begin{enumerate}
    \item\label{step:one_basis} We express $S$ as $S= VV^\top$, where $V$ is a matrix defined in terms of the orthonormal bases of $\{\tilproj_i\}_{i=1}^m$, where $\tilproj_i:=I-P_i$ denotes the orthogonal projector onto $\row(A_i)$ for each $i\in[m]$. Therefore, it is sufficient to bound the eigenvalues of $VV^\top$, which are the same as those of $V^\top V$, since $V$ is a square matrix.
    \item\label{step:two_QR} We perform a QR factorization on the matrix whose columns are the orthonormal basis vectors from the previous step.
    \item\label{step:three_comments} We comment on the properties of individual components obtained from the QR factorization.
    \item\label{step:four_Weyl} We use Schur's complement and Weyl's inequality to bound the eigenvalues of $S$, obtaining a bound on the condition number.
\end{enumerate}

The bound in Theorem 2 illustrates how high levels of angular heterogeneity make the condition number of $S$ small, and consequently the optimal convergence rate of APC better. To complement this bound, we now derive a simpler bound in terms of $m$, the number of machines.  

\color{black}
\begin{theorem}\label{thm:kappa_S}
For any $n\times n$ system of equations and $m$ machines with a given cross-machine angular heterogeneity $\theta_{\text{H}} > \cos^{-1}\left( \frac{1}{m - 1}\right)$, the following bound holds independent of the number of equations $n$:
$$
    \kappa(S )\le \frac{1 + (m - 1)\cos \theta_{\text{H}} }{1-(m - 1)\cos{\theta_{\text{H}}}}.
$$
\end{theorem}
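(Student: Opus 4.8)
The plan is to realize $S$ as a Gram matrix and thereby reduce the statement to an estimate on the norm of an off-diagonal block operator. Set $Q_i := I - P_i = A_i^\top(A_iA_i^\top)^{-1}A_i$, the orthogonal projector onto $\row(A_i)$, so that $S = \sum_{i=1}^m Q_i$. For each $i$ choose a matrix $R_i \in \R^{n\times r_i}$ whose columns form an orthonormal basis of $\row(A_i)$, where $r_i := \dim\row(A_i)$; then $Q_i = R_iR_i^\top$, and stacking these horizontally into $R := [\,R_1 \mid \cdots \mid R_m\,]$ gives $S = RR^\top$. Since $A$ is invertible, its rows (which are distributed among the blocks $A_i$) span $\R^n$, so $\sum_i \row(A_i) = \R^n$ and $S$ is positive definite. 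Consequently the nonzero eigenvalues of $S = RR^\top$ coincide with those of the block Gram matrix $G := R^\top R$, and since $S$ is positive definite, $\lambda_{\max}(S) = \lambda_{\max}(G)$ while $\lambda_{\min}(S)$ equals the smallest positive eigenvalue of $G$, hence $\lambda_{\min}(S) \ge \lambda_{\min}(G)$. The diagonal blocks of $G$ are $R_i^\top R_i = I_{r_i}$, so $G = I + E$ where $E$ is symmetric with $E_{ii} = 0$ and $E_{ij} = R_i^\top R_j$ for $i \neq j$; and because $R_i$ carries unit vectors bijectively to unit vectors of $\row(A_i)$, Definition~\ref{def:cosine} gives $\norm{E_{ij}} = \max\{|u^\top v| : u\in\row(A_i),\, v\in\row(A_j),\, \norm{u}=\norm{v}=1\} = \cos\theta_{ij} \le \cos\theta_{\text{H}}$ for all $i\neq j$, the last inequality being the definition of cross-machine angular heterogeneity.

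The crux of the proof is the bound $\norm{E} \le (m-1)\cos\theta_{\text{H}}$. A plain triangle-inequality estimate over all $m(m-1)$ cross-blocks is too lossy; instead, given a unit vector $\xi = (\xi_1,\dots,\xi_m)$ partitioned conformally with the blocks, set $a_i := \norm{\xi_i}$ (so $\sum_i a_i^2 = 1$) and $\sigma := \sum_i a_i$, and bound $\norm{(E\xi)_i} \le \sum_{j\neq i}\norm{E_{ij}}\norm{\xi_j} \le \cos\theta_{\text{H}}\,(\sigma - a_i)$. Summing the squares gives $\norm{E\xi}^2 \le (\cos\theta_{\text{H}})^2 \sum_i (\sigma - a_i)^2 = (\cos\theta_{\text{H}})^2\big((m-2)\sigma^2 + 1\big) \le (\cos\theta_{\text{H}})^2 (m-1)^2$, where the last step uses $\sigma^2 \le m$ by Cauchy--Schwarz. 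This is the step I expect to be the main obstacle: the Cauchy--Schwarz refinement (exploiting that the block masses $a_i^2$ sum to one) is exactly what replaces a weak $(m-1)\sqrt{m}$ factor by the correct $m-1$.

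Finally I would assemble the pieces. The hypothesis $\theta_{\text{H}} > \cos^{-1}(1/(m-1))$ is equivalent to $(m-1)\cos\theta_{\text{H}} < 1$, so $\norm{E} < 1$; since $E$ is symmetric, every eigenvalue of $G = I + E$ lies in $[\,1 - \norm{E},\, 1 + \norm{E}\,]$. Hence $\lambda_{\max}(S) \le 1 + (m-1)\cos\theta_{\text{H}}$ and $\lambda_{\min}(S) \ge 1 - (m-1)\cos\theta_{\text{H}} > 0$, and because $t \mapsto \frac{1+t}{1-t}$ is increasing on $[0,1)$ we conclude $\kappa(S) = \lambda_{\max}(S)/\lambda_{\min}(S) \le \frac{1 + (m-1)\cos\theta_{\text{H}}}{1 - (m-1)\cos\theta_{\text{H}}}$, a bound manifestly independent of $n$. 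As a sanity check, $\theta_{\text{H}} = \frac{\pi}{2}$ yields $\kappa(S) \le 1$, recovering Proposition~\ref{prop:total_orth}.
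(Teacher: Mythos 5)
Your proof is correct, and its skeleton is the same as the paper's: write $S=RR^\top$ with $R$ built from orthonormal bases of the local data spaces, pass to the Gram matrix $G=R^\top R=I+E$, bound the off-diagonal block part by $(m-1)\cos\theta_{\text{H}}$, and conclude via the eigenvalue interval $[1-\norm{E},\,1+\norm{E}]$ (the paper phrases this last step through Weyl's inequality). Where you diverge is in how the bound $\norm{E}\le(m-1)\cos\theta_{\text{H}}$ is obtained. The paper first bounds each block by $\norm{V_i^\top V_j}\le\norm{\tilproj_i\tilproj_j}=\cos\theta_{ij}$ using Meyer's identity for products of orthogonal projectors, then passes to the ``compressed'' $m\times m$ matrix of block norms, invokes monotonicity of the spectral radius for entrywise-dominated nonnegative matrices, and finishes with the Gerschgorin circle theorem. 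You instead identify $\norm{E_{ij}}=\cos\theta_{ij}$ directly from the isometry of the orthonormal-basis matrices (no projector lemma needed, and in fact with equality), and you replace the compressed-matrix/Gerschgorin chain by a direct estimate of $\norm{E\xi}^2$ via the block norms $a_i$ and the Cauchy--Schwarz bound $\bigl(\sum_i a_i\bigr)^2\le m$; this is an elementary, self-contained way of computing the same quantity, since the paper's $\rho(L_{\text{UB}})$ equals exactly $(m-1)\cos\theta_{\text{H}}$. A minor additional benefit of your write-up is that by working with $r_i=\dim\row(A_i)$ and the smallest \emph{positive} eigenvalue of $G$, it does not rely on $R$ being square, whereas the paper's Step 1 implicitly uses $\sum_i p_i=n$ with full-rank blocks; under the paper's standing invertibility assumption the two coincide.
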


We provide a sketch of the main innovation of the proof and relegate the full proof to the appendix.
\begin{enumerate}
    \item\label{step:one_basis} We repeat the first step outlined in the proof sketch for Theorem ~\ref{thm:kappa_S1}.
    \item\label{step:two_minimax} We argue that the maximum (respectively, minimum) eigenvalue of $S$ is given by the supremum (respectively, infimum) of the expression $\|Vu\|^2$ on the unit sphere $U:=\{u\in\R^n:\|u\| = 1\}$.
    \item\label{step:three_rephrasing} We argue that finding the supremum (respectively, infimum) of $\|Vu\|^2$ over the unit sphere is the same as finding the supremum (respectively, infimum) of $\|\sum_{i \in m} z_i\|^2$, such that $z_i \in \mathcal{R}(A_i)$ and $\sum_{i = 1}\|z_i\|^2 = 1$.
    \item\label{step:four_final} We use the fact that for any $1 \leq i, j \leq m$ we have that $|z_i^\top z_j| \leq |z_i||z_j|\cos\theta_{ij}  \leq |z_i||z_j|\cos \thetah$ by definitions of $\cos \theta_{ij}, \cos \thetah$ and apply AM-GM inequality to finish the proof.
\end{enumerate}

Theorem~\ref{thm:kappa_S} shows that the condition number of $S$ (and hence also the convergence rate $\rhoapc$) is upper-bounded by an expression that is independent of the data dimension $n$. As expected, the higher the cross-machine angular heterogeneity, the tighter is the bound and the greater is the likelihood of APC converging faster to the true solution. Moreover, the result suggests that increasing the number of machines may slow down the convergence rate, which is in agreement with our intuition that packing more local data spaces into the same global space $\R^n$ may result in reducing the angular divergence between the less similar data spaces\footnote{Note that the value of $\theta_{\text{H}}$ depends only on the pair of local data spaces with the greatest (rather than least) cosine similarity.}.

Similarly, ~\ref{thm:kappa_S1} provides a bound on the condition number that is independent of the number of machines $m$, but that gets worse as $n$ increases. Again, this is in line with our observations as seen in the Experiments section.

Having examined $\kappa(S )$, which determines $\rhoapc$, we now examine $\kappa(A^\top A)$, which determines $\rhodhbm$. 

\begin{theorem}\label{prop:a_transpose_a}
Let $a_k^\top\in\R^{ 1\times n}$ denote the $k$-th row of $A$ for each $k\in[n]$. We have
\begin{align}\label{eq:final_a_tr_a}
    \kappa(A) \ge \frac{\max_{k\in [n]} \norm{a_k} }{  \min_{\ell\in[n] }\left\{ \norm{a_\ell} \sin\thetamin^{(\ell)}\right\} },
\end{align}
where
$$
    \thetamin^{(\ell)}:=\cos^{-1}\max_{k\in[n]\setminus\{\ell\} }\left(\frac{|a_k^\top a_\ell  |}{ \norm{a_k } \norm{a_\ell} } \right)
$$
is the minimum angle between $a_\ell^\top$ and any other row of $A$.
\end{theorem}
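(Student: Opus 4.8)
The plan is to use $\kappa(A)=\norm{A}\cdot\norm{A^{-1}}$ and to lower-bound the two factors separately. For $\norm{A}$, I would simply note that the spectral norm dominates the norm of any row: $\norm{A}=\norm{A^\top}\ge\norm{A^\top e_k}=\norm{a_k}$ for every $k\in[n]$, so $\norm{A}\ge\max_{k\in[n]}\norm{a_k}$, which is the numerator in~\eqref{eq:final_a_tr_a}. All the real work is in showing $\norm{A^{-1}}\ge 1/\min_{\ell}\{\norm{a_\ell}\sin\thetamin^{(\ell)}\}$.

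For that, write $A^{-1}=[c_1\mid\cdots\mid c_n]$ column-wise. The identity $AA^{-1}=I$ reads $a_k^\top c_\ell=\delta_{k\ell}$, so each column $c_\ell$ is orthogonal to every row $a_k$ with $k\neq\ell$ and satisfies $a_\ell^\top c_\ell=1$. Fix $\ell$, set $W_\ell:=\spann\{a_k:k\neq\ell\}$, and let $P_\ell$ be the orthogonal projector onto $W_\ell^\perp$; since $c_\ell\in W_\ell^\perp$, we get $1=a_\ell^\top c_\ell=(P_\ell a_\ell)^\top c_\ell\le\norm{P_\ell a_\ell}\,\norm{c_\ell}$ by Cauchy--Schwarz, hence $\norm{A^{-1}}\ge\norm{c_\ell}\ge 1/\norm{P_\ell a_\ell}$. (Here $P_\ell a_\ell\neq 0$, since $P_\ell a_\ell=0$ would place $a_\ell$ in the span of the other rows, contradicting invertibility of $A$.)

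It remains to bound $\norm{P_\ell a_\ell}$ from above by $\norm{a_\ell}\sin\thetamin^{(\ell)}$. Let $k^*\in[n]\setminus\{\ell\}$ attain the maximum defining $\thetamin^{(\ell)}$, so that the angle between $a_\ell$ and $a_{k^*}$ is exactly $\thetamin^{(\ell)}$. Because $\spann\{a_{k^*}\}\subseteq W_\ell$, we have $W_\ell^\perp\subseteq\{a_{k^*}\}^\perp$, and for nested subspaces the orthogonal projection onto the smaller one has smaller norm, so $\norm{P_\ell a_\ell}\le\norm{Qa_\ell}$, where $Q$ is the orthogonal projector onto $\{a_{k^*}\}^\perp$; elementary trigonometry then gives $\norm{Qa_\ell}=\norm{a_\ell}\sin\thetamin^{(\ell)}$. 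Combining these bounds over all $\ell$ yields $\norm{A^{-1}}\ge 1/\min_{\ell}\{\norm{a_\ell}\sin\thetamin^{(\ell)}\}$, and multiplying by $\norm{A}\ge\max_k\norm{a_k}$ finishes the proof.

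The only subtle step — and the crux of the argument — is the chain $\norm{P_\ell a_\ell}\le\norm{Qa_\ell}=\norm{a_\ell}\sin\thetamin^{(\ell)}$: passing from the orthogonality of $c_\ell$ to \emph{all} other rows down to the single worst pairwise angle $\thetamin^{(\ell)}$ via the monotonicity of orthogonal projections under subspace inclusion. Everything else reduces to Cauchy--Schwarz and the definition of the condition number.
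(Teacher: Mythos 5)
Your proof is correct, and it takes a genuinely different route from the paper's. The paper works through the QR factorization: it shows $\kappa(A)=\kappa(R)$, sorts the rows of $A$ in descending order of norm, lower-bounds $\kappa(R)$ by the ratio of extreme diagonal entries of $R$, and then uses the Gram--Schmidt formulas plus Cauchy--Schwarz to bound $\min_k r_{kk}$ by $\min_k\{\norm{a_k}\sin\thetamin^{(k)}\}$ --- a step that requires a somewhat delicate case analysis to pass from ``minimum angle among previously processed rows'' to ``minimum angle among all rows.'' You instead exploit the biorthogonality $a_k^\top c_\ell=\delta_{k\ell}$ between the rows of $A$ and the columns $c_\ell$ of $A^{-1}$: Cauchy--Schwarz gives $\norm{A^{-1}}\ge\norm{c_\ell}\ge 1/\norm{P_\ell a_\ell}$, and the monotonicity of orthogonal projections under the inclusion $W_\ell^\perp\subseteq\{a_{k^*}\}^\perp$ lets you relax from the full span of the other rows to the single closest row, giving $\norm{P_\ell a_\ell}\le\norm{a_\ell}\sin\thetamin^{(\ell)}$; together with the trivial bound $\norm{A}\ge\max_k\norm{a_k}$ this yields the claim. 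All steps check out (including the nonvanishing of $P_\ell a_\ell$, which follows from the invertibility of $A$, or indeed directly from $(P_\ell a_\ell)^\top c_\ell=1$). Your argument is shorter and avoids both the row permutation and the ordering case analysis, and it delivers the bound index by index; the paper's QR route is less direct here but keeps the analysis within the Gram--Schmidt framework it reuses elsewhere (e.g., in the proof sketch of its other main theorem).
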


We outline the key steps of the proof below and relegate the full proof to the appendix.
\begin{enumerate}
    \item\label{step:one} We first show that $\kappa(A^\top A) = (\kappa(A))^2 = (\kappa(R))^2$, where $R$ is defined by the QR factorization $A=QR$. 
    \item\label{step:two} We then exploit the fact that we can freely permute the rows of $A$ without changing the condition number: we arrange the rows of $A$ in the descending order of their norms. We then relate $\kappa(R)$ to the diagonal entries of $R$ by using the relation $\kappa(R) = \norm{R}\norm{R^{-1}}$ and the fact that the eigenvalues of $R$ are its diagonal entries. As a result, we obtain $\kappa(R) \geq \frac{\max_{k\in [n]} |r_{kk}|  }{\min_{k\in [n]}|r_{kk}|}$.
    \item\label{step:three} We use the properties of Gram--Schmidt orthogonalization (a procedure integral to QR factorization)  to determine the relationship between the entries of $R$ and the norms of the rows of $A$. We use the row order resulting from Step 1 along with the Cauchy--Schwarz inequality in order to show that $\min_{k\in[n] } r_{kk}\le \min_{k\in [n]} \left\{ \norm{ a_k}\sin \theta_{\min}^{(k)}\right\} $.
    \item\label{step:four} Finally, we integrate the results of the previous three steps.
\end{enumerate}

If $\ell^* = \argmin_{\ell \in [n]} \theta_{\min}^{(\ell)}$ is the row that makes the globally smallest angle, observe that $\min_{\ell \in [n]}\{\|a_\ell\|\sin \theta_{\text{min}}^{(\ell)}\} \le \norm{a_{\ell^*}} \sin \theta_{\min}^{(\ell^*)}$. Moreover, we need to have that $\theta_{\min}^{(\ell^*)} \leq \phimin$ as the global minimum $\theta_{\min}^{(\ell^*)}$ has to be upper bounded by any local minimum $\phi_i$. Using these two facts, we deduce from~\eqref{eq:final_a_tr_a} that
\begin{align}\label{eq:phimin}
        \kappa(A^\top A) &\ge \left(\frac{\max_{k \in [n]}\|a_k\|}{\min_{\ell \in [n]}\{\|a_\ell\|\sin \theta_{\text{min}}^{(\ell)}\}}\right)^2 \cr
    &\ge \left(\frac{\max_{k \in [n]}\|a_k\|}{\|a_{\ell^*}\|\sin \theta_{\text{min}}^{(\ell^*)}}\right)^2 \cr&\ge \frac{1}{\sin^2 \theta_{\text{min}}^{(\ell^*)}} \geq \frac{1}{\sin^2 \phimin}.
\end{align}
Thus, it suffices to have just one machine with low local data heterogeneity for the condition number of $A^\top A$ to be large.

Theorem~\ref{prop:a_transpose_a} provides a bound on $\kappa(A^ \top A) = (\kappa(A))^2$ not only in terms of the minimum local angular heterogeneity  ${\phimin:=\min_{i\in[m]} \phi_i}$, but also in terms of the variation in the norms of the rows of $A$.

To see the dependence on the variation in the norms of $\{a_k:k\in[n]\}$, one can easily verify that~\eqref{eq:final_a_tr_a} implies that
\begin{align}\label{eq:norm_bound}
    \kappa(A^\top A)\ge \frac{\max_{k\in [n]} \norm{a_k}^2  }{\min_{\ell\in [n]} \norm{a_\ell}^2}.
\end{align}
Therefore, a single row of $A$ with an atypically large (or small) norm suffices to make $\kappa(A^\top A)$ large. 

Besides, it is worth noting that the tightness of the bound established in Theorem~\ref{prop:a_transpose_a} is evident from Example~\ref{eg:exampleone}.  Furthermore, Theorem~\ref{prop:a_transpose_a} is a general result on condition numbers as it does not make any assumptions on $A$ other than that it is a square matrix. Hence, this result may be of independent interest to the reader. Finally, the theorem leads to a lower bound on $\kappa(S )$, as shown below. 

\begin{proposition}\label{cor}
We always have $\kappa(S ) \ge \frac{1}{\sin^2\theta_{\text{H}}}$ independently of the number of equations $n$ and the number of machines $m$.
\end{proposition}

\begin{proof}
    By definition, there exist two machines $i,j\in[m]$ and two unit-norm vectors $u\in\row(A_i)$ and $v\in\row(A_j)$ such that $u^\top v=\cos\theta_{\text{H}}$. We now construct a matrix $V\in\R^{n\times n}$ as described in Step~\ref{step:one} in the proof of Theorem~\ref{thm:kappa_S} after setting $v_1^i:=u$ and $v_1^j:=v$. Finally, we replace $A$ with $V^\top$ in Theorem~\ref{prop:a_transpose_a} and use the fact that all the rows of $V^\top$ have the same norm to obtain the desired bound.
\end{proof}

We now combine the bound established in  Theorem~\ref{thm:kappa_S} with the expression for $\rhoapc$ provided in~\eqref{eq:rho_APC} and simplify the result in order to obtain an upper bound on $\rhoapc$. Similarly, combining Proposition~\ref{cor} with~\eqref{eq:rho_APC} results in a lower bound on $\rhoapc$. We repeat these steps with the closed-form expressions we provided in Section~\ref{subsec:apc} for the optimal convergence rates $\rhomou$ and $\rhobc$  to obtain similar bounds, which we summarize in Table~\ref{tab:my_label}. 

Likewise, we combine the bounds established in~\eqref{eq:phimin} and~\eqref{eq:norm_bound} with the expression for $\rhodhbm$ provided in~\eqref{eq:rho_dhbm} in order to obtain the following lower bounds on $\rhodhbm$: 
\begin{align*}
    \rhodhbm = 1 - \frac{2}{\sqrt{\kappa(A^\top A)}+1} &\stackrel{(a)}\ge 1 - \frac{2}{\frac{1}{\sin\phimin}+1 } \cr&= \frac{2}{1+\sin\phimin}-1,\cr
\end{align*}
and
\begin{align*}
    \rhodhbm = 1 - \frac{2}{\sqrt{\kappa(A^\top A)}+1} &\stackrel{(b)}\ge 1 - \frac{2}{\frac{\max_k \|a_k\| }{\min_\ell \|a_\ell\| }+1 } \cr&= \frac{\max_k\norm{a_k} -\min_\ell \norm{a_\ell} }{\max_k\norm{a_k} + \min_\ell\norm{a_\ell}},\cr
\end{align*}
where $(a)$ follows from~\eqref{eq:phimin} and the fact that $\phi_i\in[0,\frac{\pi}{2}]$ for all $i\in[m]$, and $(b)$ follows from~\eqref{eq:norm_bound}. We repeat these steps with the closed-form expressions we provided in Section~\ref{subsec:dhbm} for $\rhodgd$ and $\rhonag$ to obtain similar convergence rate bounds, which we summarize in Table~\ref{tab:my_label}.

\begin{table}[h!]
    \caption{Summary of our bounds on the optimal convergence rates of projection-based and gradient-based methods. MLM: Mou, Liu, and Morse~\cite{mou}; BC: block Cimmino method~\cite{duff2015augmented,sloboda1991projection,arioli1992block};  APC: Accelerated Projection-Based Consensus~\cite{azizan2019distributed}; DGD: Distributed Gradient Descent~\cite{yuan2016convergence}; D-NAG: Distributed Nesterov's Accelerated Gradient Descent~\cite{nesterov1983method}; D-HBM: Distributed Heavy-Ball Method~\cite{polyak1964some} }
    \centering
\begin{tabular}{|| c ||}
\hline\\
    $
        1- \sin^2\thetah\le \rhomou \le \left(1 -\frac{1}{m} \right)(1+\cos\thetah)
    $
    \\\\
    \hline\\
    $
        \frac{2}{1+ \sin^2\theta_{\text{H}} }-1\le \rhobc\le (m-1)\cos\thetah
    $
    \\\\
    \hline \\
    $
        \frac{2}{1 + \sin\theta_{\text{H}}} -1 \le \rhoapc \le \frac{(m-1)\cos\theta_{\text{H} } }{ 1 + \sqrt{1- (m-1)^2\cos^2\theta_{\text{H}}}} 
    $
    \\\\
    \hline\hline\\
    $   \rhodgd\ge
        \frac{\max_k\norm{a_k}^2 -\min_\ell \norm{a_\ell}^2 }{\max_k\norm{a_k}^2 + \min_\ell\norm{a_\ell}^2} 
    $,
    $
        \rhodgd\ge
        \frac{2 }{1+\sin^2\phimin }-1 
    $
    \\\\
    \hline\\
    $
        \rhonag \ge 1 - \frac{ 2}{\sqrt{ 3\frac{ \max_k\norm{a_k} }{ \min_\ell \norm{ a_\ell}  } +1}}
    $,
    $
        \rhonag \ge 1 - \frac{ 2\sin\phimin }{\sqrt{3 + \sin^2\phimin}}
    $
    \\\\
    \hline\\
    $   \rhodhbm\ge
        \frac{\max_k\norm{a_k} -\min_\ell \norm{a_\ell} }{\max_k\norm{a_k} + \min_\ell\norm{a_\ell}} 
    $,
    $
        \rhodhbm\ge
        \frac{2 }{1+\sin\phimin }-1 
    $
    \\\\
    \hline
\end{tabular}
    \label{tab:my_label}
\end{table}

\subsection{Comparison of $\rhoapc$ and $\rhodhbm$}\label{subsec:apc_vs_dhbm}

To compare $\rhoapc$ with $\rhodhbm$ in settings with different levels of local and cross-machine angular data heterogeneity, we first deduce from~\eqref{eq:rho_APC} and~\eqref{eq:rho_dhbm} that $\rhoapc\le\rhodhbm$ if and only if $\kappa(S )\le\kappa(A^\top A)$. Hence, we obtain the following result as an immediate consequence of Theorem~\ref{thm:kappa_S} and~\eqref{eq:phimin}. 

\begin{corollary}
    A sufficient condition for $\rhoapc < \rhodhbm$ is $\sin^2\phimin < \frac{1 - (m-1)\cos\theta_{\text{H}}}{1 + (m-1)\cos\theta_{\text{H}}}$, or equivalently,
\begin{align}\label{eq:sufficient_condition}
    \frac{2}{1 + \frac{1}{(m - 1)\cos\theta_\text{H}}} < \cos^2\phimin.
\end{align}
\end{corollary}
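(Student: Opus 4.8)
The plan is to collapse the claim to a single scalar comparison of condition numbers and then quote two bounds already established in the paper. As observed immediately before the statement, the map $\kappa\mapsto 1-\frac{2}{\sqrt{\kappa}+1}$ is strictly increasing, so $\rhoapc\le\rhodhbm$ holds if and only if $\kappa(S)\le\kappa(A^\top A)$. It therefore suffices to exhibit one number that upper-bounds $\kappa(S)$ and lower-bounds $\kappa(A^\top A)$, and Theorem~\ref{thm:kappa_S} together with~\eqref{eq:phimin} supplies exactly that.

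Before applying Theorem~\ref{thm:kappa_S} I would check that its hypothesis is met. Since $A$ is invertible, no two rows of any block $A_i$ are scalar multiples of one another, so $\phi_i>0$ for every $i$ and hence $\cos^2\phimin<1$. The assumption $(m-1)\cos\theta_{\text{H}}\le\cos^2\phimin$ then forces $(m-1)\cos\theta_{\text{H}}<1$, i.e., $\theta_{\text{H}}>\cos^{-1}\big(\tfrac{1}{m-1}\big)$, which is precisely the hypothesis of Theorem~\ref{thm:kappa_S}. That theorem thus gives $\kappa(S)\le\frac{1+(m-1)\cos\theta_{\text{H}}}{1-(m-1)\cos\theta_{\text{H}}}$ with a strictly positive denominator, and~\eqref{eq:phimin} gives $\kappa(A^\top A)\ge\frac{1}{\sin^2\phimin}$.

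Chaining these two estimates, it remains only to verify the scalar inequality $\frac{1+(m-1)\cos\theta_{\text{H}}}{1-(m-1)\cos\theta_{\text{H}}}\le\frac{1}{\sin^2\phimin}$. Since the denominator on the left is positive, I would clear denominators, collect terms, and substitute $\cos^2\phimin=1-\sin^2\phimin$ to bring it into the stated form; the two displayed versions of the hypothesis are interchanged by this same substitution.

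The one step that needs genuine care is this last manipulation: clearing denominators turns the inequality into $\sin^2\phimin\big(1+(m-1)\cos\theta_{\text{H}}\big)\le 1-(m-1)\cos\theta_{\text{H}}$, and one must both justify multiplying through by $1-(m-1)\cos\theta_{\text{H}}$ (its positivity was secured above) and be precise about which clean inequality this collapses to, using $1+(m-1)\cos\theta_{\text{H}}>0$. Everything upstream of it is a straight appeal to the monotonicity observation, Theorem~\ref{thm:kappa_S}, and~\eqref{eq:phimin}.
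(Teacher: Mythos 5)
Your route is the paper's route: reduce to $\kappa(S)\le\kappa(A^\top A)$ via monotonicity of $\kappa\mapsto 1-\frac{2}{\sqrt{\kappa}+1}$, then chain the upper bound of Theorem~\ref{thm:kappa_S} against the lower bound~\eqref{eq:phimin} (the paper presents the corollary as an immediate consequence of exactly these two facts, and your verification of the hypothesis $\theta_{\text{H}}>\cos^{-1}\bigl(\tfrac{1}{m-1}\bigr)$ is a sensible addition). The gap is in the one step you flag as needing care: the scalar inequality you must verify,
\begin{align*}
\sin^2\phimin\bigl(1+(m-1)\cos\theta_{\text{H}}\bigr)\;\le\;1-(m-1)\cos\theta_{\text{H}},
\end{align*}
does \emph{not} collapse to the stated hypothesis $\sin^2\phimin\le 1-(m-1)\cos\theta_{\text{H}}$; no substitution of $\cos^2\phimin=1-\sin^2\phimin$ can remove the factor $1+(m-1)\cos\theta_{\text{H}}>1$. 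Writing $c:=(m-1)\cos\theta_{\text{H}}$ and $s:=\sin^2\phimin$, the inequality needed to chain the two bounds is $s\le\frac{1-c}{1+c}$, which is strictly stronger than the assumed $s\le 1-c$ whenever $c,s>0$: from $s\le 1-c$ you only get $s(1+c)\le 1-c^2$, which exceeds $1-c$. Concretely, $c=s=\tfrac12$ satisfies~\eqref{eq:sufficient_condition}, yet Theorem~\ref{thm:kappa_S} then gives only $\kappa(S)\le 3$ while~\eqref{eq:phimin} gives only $\kappa(A^\top A)\ge 2$, so the two estimates cannot be stitched together.

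As executed, therefore, your argument establishes the conclusion only under the stronger hypothesis $\sin^2\phimin\le\frac{1-(m-1)\cos\theta_{\text{H}}}{1+(m-1)\cos\theta_{\text{H}}}$, equivalently $(m-1)\cos\theta_{\text{H}}\le\frac{\cos^2\phimin}{1+\sin^2\phimin}$, and not under~\eqref{eq:sufficient_condition} as stated. Note that this failure of the chain does not by itself refute the corollary, since~\eqref{eq:phimin} is only a lower bound on $\kappa(A^\top A)$; but it does mean the final clearing-of-denominators step cannot be carried out as you describe, so your proof is incomplete at precisely that point. (The paper's own proof is the one-line appeal to Theorem~\ref{thm:kappa_S} and~\eqref{eq:phimin} and glosses over the same algebra, so spelling it out, as you attempt, is exactly where the difficulty surfaces.) To close the argument you must either adopt the strengthened hypothesis above or bring in information beyond the two quoted bounds.
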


We now consider two realistic cases for $\phimin$ and $\theta_{\text{H}}$.

\subsubsection{Small $\phimin$ and Large $\theta_\text{H}$} This is the case of high cross-machine heterogeneity accompanying low local heterogeneity. Hence, this corresponds to most real-world scenarios in which different machines being exposed to different environments results in significant data variation across machines rather than within any local dataset. From~\eqref{eq:sufficient_condition} and the preceding discussion, it is clear that APC is likely to outperform D-HBM.

\subsubsection{Large $\phimin$ and Small $\theta_{\text{H}}$} This may happen in federated learning scenarios in which the local data, while diverse enough, are similar across different machines.
Consequently, if the global data are diverse, then so is every local dataset. This may lead to the value of $\phimin$ being large. On the other hand, since the local datasets are similar to the global dataset, they are also similar to each other.  This may result in a small $\theta_{\text{H}}$. In light of Proposition~\ref{cor}, this means that APC converges slowly in this case. At the same time, however, highly diverse global data are likely to result in a large variation in the norms of ${\{a_k:k\in[n]\}}$ (the rows of $A$). This leads to a large $\kappa(A^\top A)$, and consequently, a poor convergence rate for D-HBM too.

\subsubsection{Small $\phimin$ and Small $\theta_{\text{H}}$} 
This scenario is likely to occur in settings where both local and cross-machine data exhibit a low level of heterogeneity. In this case, \eqref{eq:phimin} implies that D-HBM converges slowly. On the other hand, Proposition \ref{cor} implies that APC also converges slowly. Therefore, it is unclear which of the two algorithms performs better in this scenario. 

\subsubsection{Large $\phimin$ and Large $\theta_{\text{H}}$} This scenario is likely to arise in settings where the data are highly diverse, both across devices and within each device.
As previously noted, APC converges quickly in this case. Even though both local and cross-machine heterogeneity are large, it is of note that D-HBM may still converge slowly if row norms differ significantly, as a consequence of Equation \eqref{eq:norm_bound}.

While some cases are inconclusive, APC is likely to converge faster than D-HBM in most real-world scenarios, which are subsumed by Case 1. Moreover, as Theorems~\ref{thm:kappa_S} and ~\ref{prop:a_transpose_a} suggest, APC has the added advantage of its convergence rate being insensitive to any diversity in the Euclidean norms of the coefficient vectors (the rows of $A$).

Table \ref{tab:heterogeneity_table} provides a qualitative summary of how the convergence rates of optimization-based methods compare with those of projection-based methods under different heterogeneity conditions.

\begin{table*}[ht]
  \centering
  \captionsetup{justification=centering}
  \caption{A summary of the performance of projection-based and optimization-based methods for different local and cross-machine heterogeneity regimes}
  \begin{tabular}{|c|}
    \hline
    \includegraphics[width=0.9\textwidth]{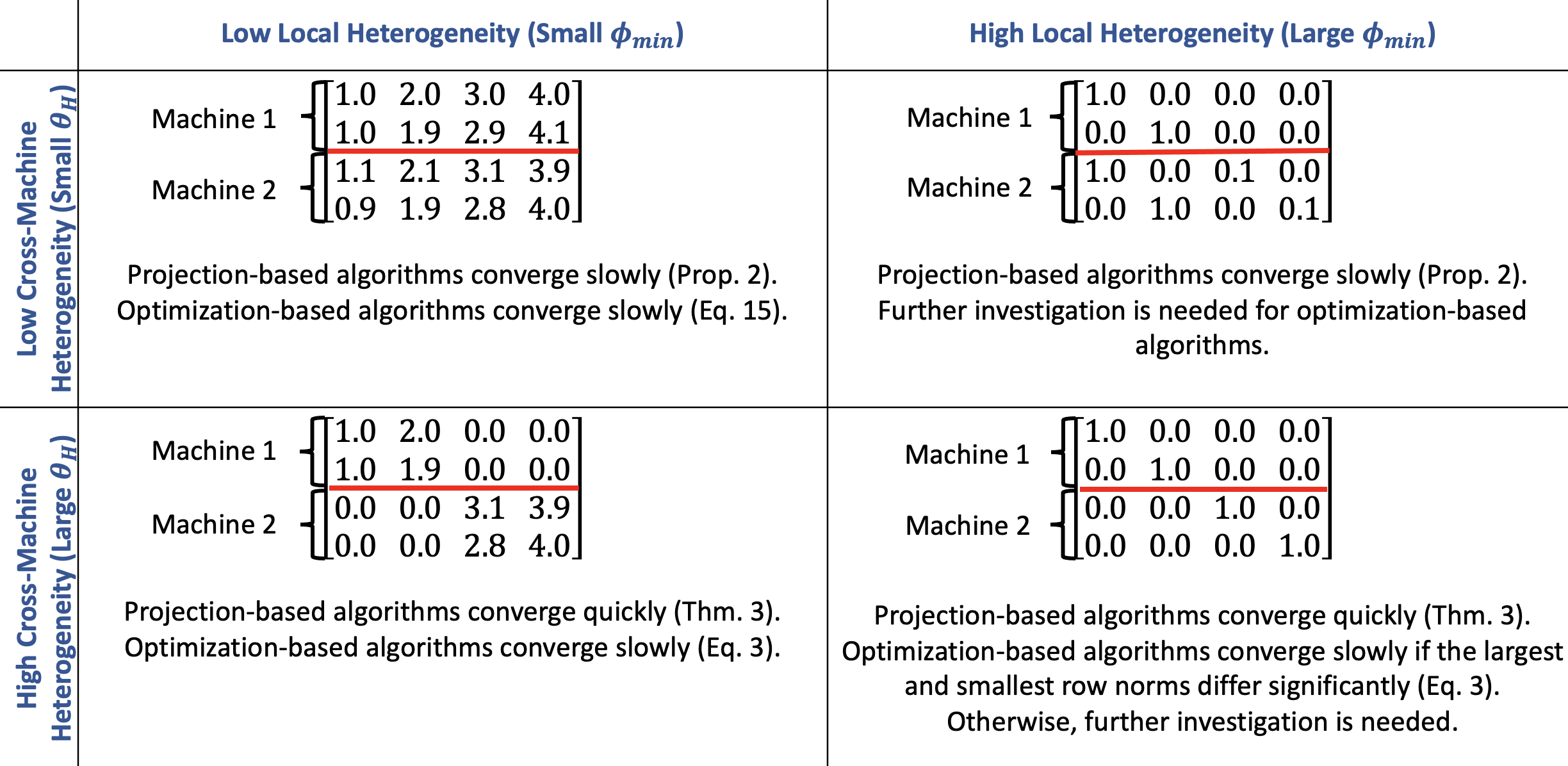} \\
    \hline
  \end{tabular}
  \label{tab:heterogeneity_table}
\end{table*}

\subsection{Comparison of Other Optimal Convergence Rates}
From the definitions of $\rhobc$ and $\rhodgd$, it is clear $(m - 1)(1 + \cos \theta_\text{H}) \leq \frac{\cos^2 \phimin}{1 + \sin^2 \phimin}$ is a sufficient condition for the block Cimmino method to converge faster than DGD. Therefore, the block Cimmino method can be compared with DGD in a similar manner in which we compared APC with D-HBM in Section~\ref{subsec:apc_vs_dhbm}. Moreover, even though~\eqref{eq:sufficient_condition} is in general not applicable to comparisons made between other pairs of algorithms, we infer from Table~\ref{tab:my_label} that any comparison made between a projection-based method and a gradient-based method will be qualitatively similar to the preceding comparisons.

\section{Illustrative Examples}\label{sec:explicit}

In this section, we present several illustrative examples of systems of equations and demonstrate how the convergence rates of the two families of methods compare for each example. 

\begin{example}\label{eg:exampleone}
Consider a block-diagonal matrix ${A = \text{diag}(A_1, ..., A_k)}$, and suppose as usual that machine $i$ stores $[A_i, b_i]$ for each $i\in [n]$. Then it can be easily verified that the resulting setup manifests total orthogonality~\ref{prop:total_orth}. However, note that $\kappa(A^\top A) = \kappa(\text{diag}(A_1^\top A_1, ..., A_k^\top A_k))$. Therefore, as long as $A_i$s are not all orthogonal matrices, APC will converge strictly faster than D-HBM. Note that this conclusion holds more generally for any matrix $A$ whose rows and columns can be relabeled to transform $A$ into a block-diagonal matrix, i.e., whenever there exists a permutation matrix $P\in\R^{n\times n}$ such that $P^T A P$ is block-diagonal. 

This scenario represents a machine learning setting in which there are $k$ sensors responsible for measuring pairwise disjoint subsets of features. Suppose the sensors perform their measurements in no particular order and that their readings are all distinct. Then the union of the $k$ sets of measurements thus obtained can be represented by a block-diagonal matrix (after applying appropriate row and column permutations, if required).

A special case of this scenario occurs when ${A = \text{diag}(a_1, ..., a_n)}$ is a diagonal matrix. Here, we additionally infer that any assignment of equations to machines yields an optimal performance for APC, while the convergence rate of D-HBM may be arbitrarily poor because $\kappa(A^TA) = \left(\frac{\max(a_{ii} )}{\min(a_{ii} )}\right)^2$ for such a matrix.
\end{example}

\begin{example}\label{eg:exampletwo}
Denote the local dataspace of the $i^\text{th}$ out of $m = \frac{n}{p}$ machines by $A_i = \begin{bmatrix}A_{i1}, \dots, A_{im}\end{bmatrix}$ where each $A_{ij} \in \mathbb{R}^{p \times p}$.
Let the entries of $A_{ij}$ for $j \neq i$ be independent  and identically distributed (i.i.d.) as $\mathcal{N}(0, \sigma^2)$,  and let the entries of $A_{ii}$ be generated  i.i.d. according to $\mathcal{N}(0, 1)$. This example generalizes Example~\ref{eg:exampleone} by allowing the off-diagonal entries to be non-zero. Using Algorithm ~\ref{eg:comp}, we can construct empirical distributions for $\thetah$ by calculating $\thetah$ for $T = 10^4$ matrices that are generated randomly as described in this example. The results are displayed in Table~\ref{fig:empirical} for various values of $\sigma$, and we additionally plot the cutoff determined by $\cos \thetah < \frac{1}{m - 1}$, which is necessary for Theorem ~\ref{thm:kappa_S} to apply. 

\begin{figure}[h]
    \centering
    \includegraphics[width=0.8\columnwidth]{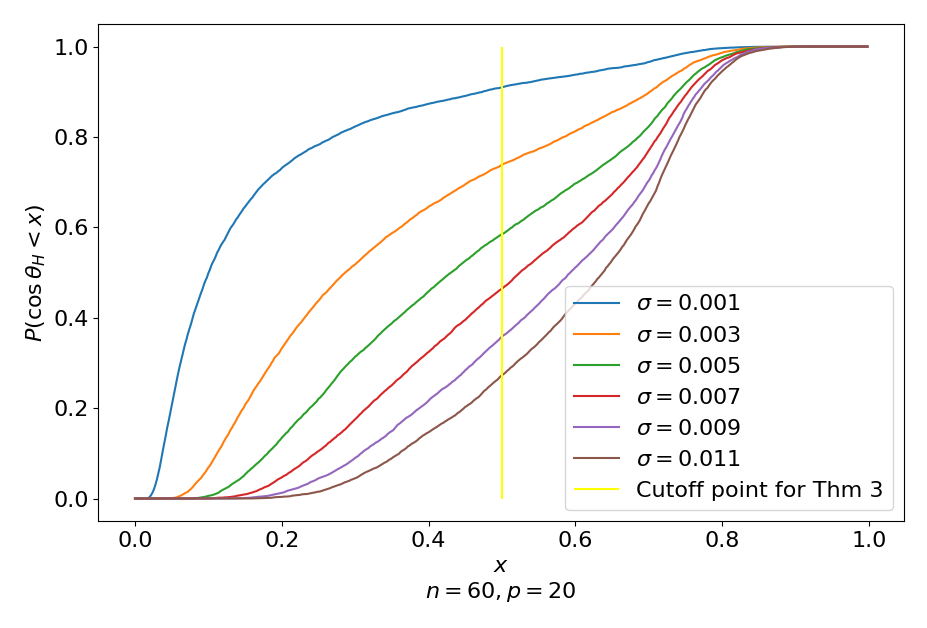}
    \captionsetup{justification=centering}
    \caption{Empirical distribution of $\cos\thetah$ for the scenario modeled in Example~\ref{eg:exampletwo}}
    \label{fig:empirical}
\end{figure}
Note that if $\sigma$ is small, then the assumptions of Theorem~\ref{thm:kappa_S} are met, and this in turn results in an informative bound on $\thetah$. 

\end{example}

\begin{example}\label{eg:examplethree}
    Let $J_n$ be the $n\times n$ matrix where every entry is $1$. Consider $A = J_n - \varepsilon I_{n}$. For small values of $\varepsilon$, this is a matrix with little variation in its entries. Note that if $a_i, a_j$ are distinct rows in $A$, we have  $a_i^\top a_j = n - 2\varepsilon$, and $\|a_i\|_2 = \sqrt{n - 2\varepsilon + \varepsilon^2}$.
    Also, recall that $\theta_{ij}$ is defined as the minimum angle between any two vectors in $\mathcal{R}(A_i)$ and $ \mathcal{R}(A_j)$. Therefore, any feasible angle serves as an upper bound on $\theta_{ij}$. Consequently, the angle made by $a_i \in \mathcal{R}(A_i)$ and $a_j \in \mathcal{R}(A_j)$ is $\cos^{-1} \left(\frac{|a_i^\top a_j|}{\|a_i\|\|a_j\|}\right)$. Using our aforementioned inferences, we conclude that 
    $$
        \theta_{ij} \leq \cos^{-1}\left(\frac{n - 2\varepsilon}{n - 2\varepsilon + \varepsilon^2}\right) = \cos^{-1}\left(1 - \frac{\varepsilon^2}{n - 2\varepsilon + \varepsilon^2}\right).
    $$
    It now follows from the definition of $\thetah$ that $\thetah \leq \cos^{-1}(1 - \frac{\varepsilon^2}{n - 2\varepsilon + \varepsilon^2})$. In other words,  $\thetah$ can be chosen to be arbitrarily close to $0$. Therefore, $\kappa(S) \geq \frac{1}{\sin^2 \thetah}$ is unbounded, making the convergence of APC arbitrarily slow. 
    
    Finally, we have $\phimin = \cos^{-1}(1 - \frac{\varepsilon^2}{n - 2\varepsilon + \varepsilon^2})$, meaning that $\phimin$ can be chosen to be arbitrarily close to $0$. As a result, $\kappa(A^\top A) \geq \frac{1}{\sin^2 \phimin}$ is unbounded, thereby making the convergence of D-HBM arbitrarily slow. Therefore, when there is little to no variation in the entries of $A$, it is unclear which algorithm performs better, as both of them are inefficient.
\end{example} 

\section{Experiments}\label{Experiments}

In this section, we validate our theoretical results with the help of  three sets of Monte-Carlo experiments:
\begin{enumerate}
    \item\label{step:experiment1} In the first experiment, we keep $n$, the number of equations in the global system~\eqref{eq:global}, fixed, and we investigate how the convergence rate of APC compares with that of D-HBM for a given number of machines $m$.
    \item\label{step:experiment2} We keep the number of machines fixed and investigate how the convergence rate of APC compares with that of D-HBM.
    \item\label{step:experiment3} We investigate the asymptotic behavior of $\thetah$ in a randomized setting.
\end{enumerate}
We now describe the experiments in detail. In every experiment, we set $p_i = \frac{n}{m}$, where $n$ is the number of equations and $m$ is the number of machines. 

\subsection*{Experiment~\ref{step:experiment1}: Dependence of $\rho_{\text{APC}}$ and $\rhodhbm$ on $m$} We set $n=120$ and generate multiple independent realizations of $A \in \mathbb{R}^{n \times n}$, whose entries are  i.i.d.\ random variables generated according to $\mathcal{N}(\mu, \sigma^2)$ with $\mu=0$ and $\sigma=1$. We then compute the condition numbers of $S $ and $A^\top A$. To make our simulations stable, we drop the samples where $\kappa(A^\top A) > 10^7$. Nevertheless, we make sure to obtain ${T=300}$ samples, and we compute the empirical expectations of $\rhodhbm$ (which is independent of the number of machines) and $\rho_{\text{APC}} = \rho_{\text{APC}}(m)$ for $m\in[n]$.

Next, we repeat all of the above steps with $\mu=1$ to examine the phenomenon of large-mean distributions leading to more pronounced differences between the convergence rates of APC and D-HBM, as described in \cite{azizan2019distributed}. Figure \ref{fig:experiment1} plots the results of Experiment \ref{step:experiment1} for $\mu \in \{0, 1\}$.
\begin{figure}[h]
    \centering
    \includegraphics[width=0.8\columnwidth]{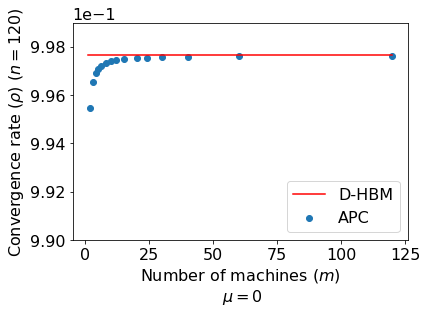}
    \includegraphics[width=0.8\columnwidth]{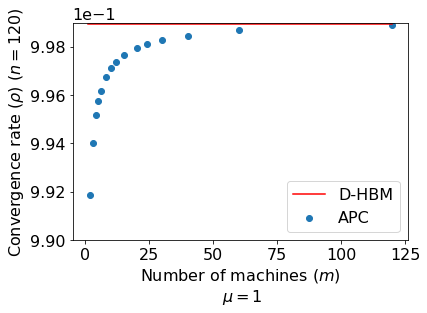}
    \captionsetup{justification=centering}
    \caption{Variation of  $\rhoapc$ and $\rhodhbm$ with $m$ for $n=120$ and $\mu\in\{0,1\}$.}
    \label{fig:experiment1}
\end{figure}

\subsubsection*{Key Inferences}
\begin{enumerate}
    \item APC clearly outperforms D-HBM in both cases. This validates the conclusions drawn from our main results in Section~\ref{sec:main_results}.
    \item As the number of machines increases, the optimal convergence rate of APC deteriorates and approaches that of D-HBM. This is to be expected for the following reason: as we increase $m$, we increase the number of local data spaces being packed into $\R^n$ (the universal data space), possibly reducing the angles between some of the local spaces. This ultimately reduces the cross-machine angular heterogeneity with some positive probability and leads to an increase in the expected value of the upper bound on $\kappa(S )$ established in Theorem~\ref{thm:kappa_S}.
    \item APC converges faster when the coefficient mean $\mu$ is increased. This is consistent with the findings of~\cite{azizan2019distributed}, and can potentially be explained by the distributional formula for $\thetah$ for non-central distributions presented in Section 10.6.4 in \cite{multivariatestats}.
\end{enumerate}

\subsection*{Experiment~\ref{step:experiment2}: Dependence of $\rho_{\text{APC}}$ and $\rhodhbm$ on $n$} We retain the setup of Experiment 1, except that we now vary the number of equations (i.e., the size of the matrix $A$) and keep the number of machines fixed, and we now draw the entries of $A$ only from $\mathcal{N}(1, 1)$. 
Note that $\rho_{\text{HBM}}$ depends on the matrix size $n\times n$ via $\kappa(A^\top A)$. Therefore, we now expect this convergence rate to vary in our Monte-Carlo simulations.  Figure \ref{fig:experiment2} displays the results of Experiment \ref{step:experiment2} for $m \in \{10, 20\}$.
\begin{figure}[h]
    \centering
    \includegraphics[width=0.8\columnwidth]{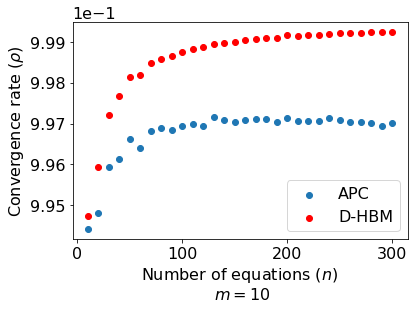}
    \includegraphics[width=0.8\columnwidth]{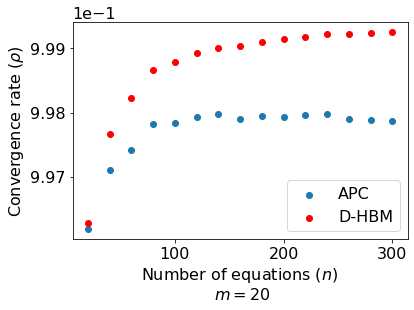}
    \captionsetup{justification=centering}
    \caption{Variation of $\rhoapc$ and $\rhodhbm$ with $n$ for fixed $m$.}
    \label{fig:experiment2}
\end{figure}

\subsubsection*{Key Inferences}
\begin{enumerate}
    \item Both $\rhoapc$ and $\rhodhbm$ increase with $n$ because the inherent complexity of~\eqref{eq:global} increases with the number of equations.
    \item The convergence rate of APC is remarkably insensitive to $n$ for large values of $n$. This can be explained with the help of Theorems~\ref{thm:kappa_S} and~\ref{prop:a_transpose_a} as follows: we know from Theorem~\ref{thm:kappa_S} that $\kappa(S )$ is upper-bounded by a quantity that depends on $n$ only through the cross-machine angular heterogeneity $\theta_{\text{H}}$. Given that the rows of $A$ are i.i.d.\ Gaussian random vectors, we do not expect $\theta_{\text{H}}$ to decrease with $n$, which implies that $\kappa(S )$ (and hence $\rhoapc$) is bounded with respect to $n$. 
\end{enumerate}

\subsection*{Experiment~\ref{step:experiment3}: Asymptotic behavior of $\theta_H$}
Suppose the entries of the matrix $A$ are sampled independently from $\mathcal{N}(0, 1)$ and that they are jointly Gaussian. Let $A_i, A_j \in \mathbb{R}^{p \times n}$ be two arbitrary local coefficient matrices. Consider the corresponding local data spaces $\mathcal{R}(A_i), \mathcal{R}(A_j)$ of dimension $p$ spanned by the rows of these machines, which are zero-mean Gaussian random vectors. We now argue that $\mathcal{R}(A_i)$ and $ \mathcal{R}(A_j)$ are uniformly sampled and independent  $p$-dimensional subspaces of $\mathbb{R}^n$. Note that a linear subspace of $\mathbb{R}^n$ is determined by its basis elements. The basis elements of $\mathcal{R}(A_i)$ and $ \mathcal{R}(A_j)$ are vectors whose entries are sampled i.i.d. from $\mathcal{N}(0, 1)$, and hence, the independence of such subspaces is obvious. Next, it is well-known that the row space of a $p \times n$ random matrix whose entries are i.i.d. and jointly Gaussian is uniformly distributed over the set of all $p$-dimensional subspaces of $\mathbb{R}^n$ \cite{uniformgrassmanian}. Therefore, $\mathcal{R}(A_i)$ and $ \mathcal{R}(A_j)$ are indeed uniformly and independently sampled elements of the set of $p$-dimensional subspaces of $\mathbb{R}^n$.

Probability distributions of principal angles between two uniformly and independently sampled $p$-dimensional subspaces of $\mathbb{R}^n$ are well-studied in the literature \cite{asymptotics1} \cite{multivariatestats}. The exact formula for the cumulative distribution function (CDF) of $\tan^{-2}\theta_{ij}$ is given in \cite[Section 10.6.4]{multivariatestats} for the case of $\frac{n - 2p - 1}{2}$ being an integer:
\begin{align}\label{eq:formula}
    &\mathbb{P}(\tan^{-2} \theta_{ij} \leq x)\cr
    &= \left(1 - \frac{1}{1 + x}\right)^{\frac{p^2}{2}}\sum_{k = 0}^{\frac{p(n - 2p - 1)}{2}}\sum_{\kappa}\frac{\left(\frac{p}{2}\right)_\kappa C_\kappa(I_p)}{(1+x)^k k!},
\end{align}
Here the sum is taken over all partitions $\kappa = (\kappa_1, ..., \kappa_p)$ such that $\sum_{\i = 1}^p \kappa_i = k$, $\kappa_i \geq 0 \text{ for all } i \in [p]$, and $\frac{n - 2p - 1}{2} \geq \kappa_1 \geq ... \geq \kappa_p$. Furthermore, $(\frac{p}{2})_\kappa$ and $C_\kappa$ are the hypergeometric coefficients and zonal polynomials with respect to partition $\kappa$, respectively. 

We now turn to numerical estimates of the probability density of $\tan^{-2}\theta_{ij}$. We keep $p$ fixed and observe how the density changes as $n$ increases. The results are plotted in Figure ~\ref{fig:density1}.

\begin{figure}[h]
    \centering
    \includegraphics[width=0.8\columnwidth]{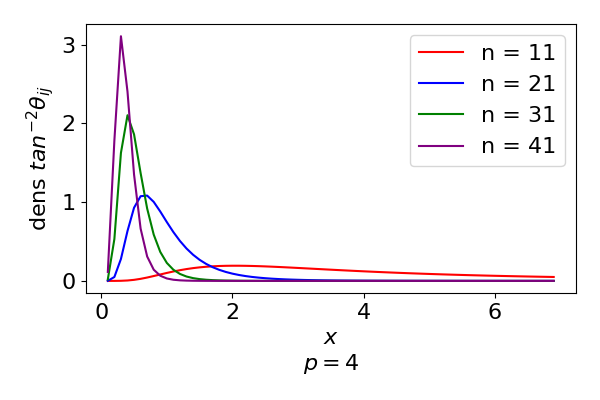}
    \includegraphics[width=0.8\columnwidth]{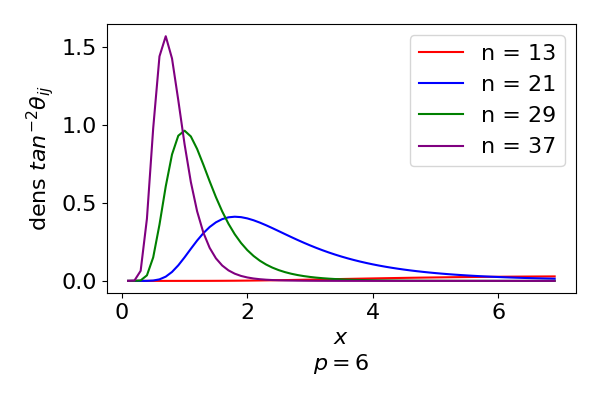}
    \captionsetup{justification=centering}
    \caption{The behavior of the density of $\tan^{-2} \theta_{ij}$ under increasing $n$ when $p \in \{4, 6\}$}
    \label{fig:density1}
\end{figure}

To estimate the behavior of $\thetah$ given the CDF of $\theta_{ij}$ for every pair $1 \leq i < j \leq m$, we employ the union bound to obtain 
\begin{align*}
    \mathbb{P}(\tan^{-2} \thetah > \varepsilon) &= \mathbb{P}(\exists \,(i, j): \tan^{-2}(\theta_{ij}) > \varepsilon)\cr
    &\leq \sum_{1 \leq i < j \leq m} \mathbb{P}(\tan^{-2} \theta_{ij} > \varepsilon) \cr
    &= \frac{n}{2p}\left(\frac{n}{p}-1\right) \mathbb{P}(\tan^{-2} \theta_{ij} > \varepsilon),
\end{align*} 
where $\varepsilon > 0$ is an arbitrary number that represents how tight of a concentration we are interested in. 

We therefore investigate the asymptotics of ${\mathbb{P}(\tan^{-2} \theta_{ij} > \varepsilon)}$ as $n \to \infty$, again using a computational approach by calculating the quantity using~\eqref{eq:formula}. The results can be found in Figure ~\ref{fig:density2}.

\begin{figure}[h]
    \centering
    \includegraphics[width=0.8\columnwidth]{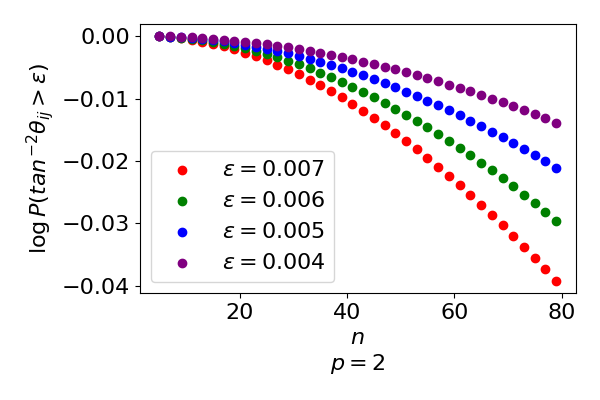}
    \includegraphics[width=0.8\columnwidth]{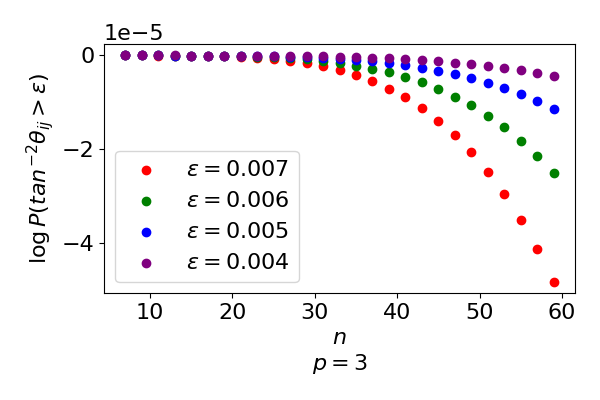}
    \captionsetup{justification=centering}
    \caption{Asymptotics of $\mathbb{P}(\tan^{-2} \theta_{ij} > \varepsilon)$ under decreasing $\varepsilon$ when $p \in \{2, 3\}$}
    \label{fig:density2}
\end{figure}

\subsubsection*{Key Inferences}
\begin{enumerate}
    \item Figure~\ref{fig:density1} suggests that as $n \to \infty$ and we keep $p$ fixed, the density of $\tan^{-2} \theta_{ij}$ concentrates around 0, meaning that $\theta_{ij} \to \frac{\pi}{2}$.
    \item Figure~\ref{fig:density2} suggests that $\mathbb{P}(\tan^{-2} \theta_{ij} > \varepsilon)$ is decreasing either exponentially or super-exponentially in the limit as $n \to \infty$. Recalling the union bound calculation, this implies that $\mathbb{P}(\tan^{-2} \thetah > \varepsilon) \to 0$ for any fixed $\varepsilon > 0$. Hence, as $n\to\infty$, we have $ \thetah \to \frac{\pi}{2}$. However, this convergence is slow and generally insufficient by itself to make definitive conclusions from Theorem~\ref{thm:kappa_S}, as it follows from the required conditions that we must have $\mathbb{P}(\tan^{-2} \thetah > \frac{p^2}{n(n - 2p)}) \to 0$ to draw such conclusions. 
\end{enumerate}

\section{Conclusion and Future Directions}

In the context of distributively solving a large-scale system of linear equations, we compared the convergence rates of two classes of algorithms that differ in their design, namely, gradient descent-based methods such as D-HBM, and projection-based methods such as APC. In doing so, we developed a novel, geometric notion of data heterogeneity called angular heterogeneity, and we used it to characterize the convergence rates of several well-known distributed linear system solvers belonging to each class. As a result of our analysis, we not only established the superiority of APC for typical real-world scenarios both theoretically and empirically, but we also provided several insights into the effect of angular heterogeneity on the efficiencies of the studied methods. Besides, we used known results in numerical linear algebra to obtain distributed algorithms for the efficient computation of our proposed angular heterogeneity metrics. Finally, as a by-product of our investigation, we obtained a tight bound on the condition number of an arbitrary square matrix in terms of the Euclidean norms of its rows and the angles between them. 

Future work could investigate the effect of the number of machines on the convergence rates of projection-based methods, such as APC. Furthermore, analyzing the smallest principal angle may provide a deeper understanding of APC in a randomized setting where the entries are Gaussian, or come from some other distribution. Other future avenues include the continual/incremental setting as in \cite{min2022one} as well as characterizing the implicit bias of APC in the underdetermined (overparameterized) case \cite{azizan2018stochastic,azizan2021stochastic}.
\bibliographystyle{ieeetr}
\bibliography{bib}

\section*{Appendix}

\subsection{Proof of Theorem ~\ref{thm:kappa_S1}}
\begin{proof}
    We follow the steps outlined below Theorem ~\ref{thm:kappa_S1}.
    \subsubsection*{Step~\ref{step:one_basis}}
    For each $i\in[m]$, let $\tilde P_i:=I-P_i$ denote the orthogonal projector onto $\row(A_i)$. Using the property~\cite[Eq. (5.13.4)]{meyer2000matrix} of orthogonal projectors, we can show that ${\tilproj_i = \sum_{j = 1}^{p_i} v_j^i(v_j^i)^\top }$, where $\{v_j^i:j\in [p_i]\}$ form an orthonormal basis for $\row(A_i)$. It now follows from~\eqref{eq:def_S } that
    \begin{align*}
        S=\sum_{i=1}^m (I-P_i) = \sum_{i=1}^m \tilde P_i =\sum_{i=1}^m\sum_{j=1}^{p_i} v_j^i (v_j^i)^\top.
    \end{align*}
    Equivalently, $S=VV^\top$, where
    \begin{align*}
        V := \begin{bmatrix}
        \uparrow & \uparrow & ... & \uparrow & \uparrow & ... & \uparrow & ... & \uparrow\\
        v_1^1 & v_2^1 & ... & v_{p_1}^{1} & v_1^2 & ... & v_{p_2}^{2} & ... & v_{p_m}^m \\
        \downarrow & \downarrow & ... & \downarrow & \downarrow & ... & \downarrow & ... & \downarrow
        \end{bmatrix}.
    \end{align*}
    It follows that
    \begin{align}\label{eq:step_1}
        \kappa(S) = \kappa(VV^\top)\stackrel{(a)}=\frac{\lambda_{\max}(VV^\top) }{\lambda_{\min}(VV^\top)},
    \end{align}
    where $(a)$ holds because the symmetry of $VV^\top $ implies that its eigenvalues equal its singular values. Thus, it suffices to bound the minimum and the maximum eigenvalues of $VV^\top$.
    \subsubsection*{Step~\ref{step:two_QR}}
    Note that $V$ is a square matrix. We obtain its QR decomposition $V=QR$ using Gram--Schmidt orthogonalization. Note that the first $p_1$ entries of $V$ constitute an orthonormal basis, which implies that $Q$ is of the form $Q = \begin{bmatrix} v_1^1 & ... & v_{p_1}^1 & w_{p_1 + 1} & ... & w_{n}\end{bmatrix}$ where $w_{p_1+1},\ldots, w_n\in\R^n$ and that $R$ is of the form
    $$
        R = \begin{bmatrix} I_{p_1\times p_1} & U_1 \\ 0_{(n - p_1) \times p_1} & U_2 \end{bmatrix}
    $$
    where $U_1\in \R^{p_1\times (n-p_1)}$ and $U_2\in \R^{(n-p_1)\times (n-p_1)}$. By expanding the product $QR$, we observe that 
    $$
        U_0 := \begin{bmatrix}
        \uparrow & ... & \uparrow & ... & \uparrow\\
        v_1^2 & ... & v_{p_2}^{2} & ... & v_{p_m}^m \\
        \downarrow & ... & \downarrow & ... & \downarrow
        \end{bmatrix} = Q\begin{bmatrix} U_1 \\ U_2 \end{bmatrix}.
    $$
    \subsubsection*{Step~\ref{step:three_comments}}  We first comment on the component $U_1$ of $R$. As prescribed by Gram--Schmidt orthogonalization, we obtain $U_1$ by taking inner products of $v_1^2, ..., v_{p_m}^{m}$ with $q_1 = v_1^1, ..., q_{p_1} = v_{p_1}^1$. By the definition of $\theta_{1i}$, we have $|(v_k^1)^\top v_{l}^i| \leq \cos\theta_{1i}$, as $v_k^1$ and $ v_l^i$ are in the orthonormal bases of $\mathcal{R}(A_1)$ and $ \mathcal{R}(A_i)$, respectively. Therefore, $\max(|U_1|) \leq \max_{2 \leq i \leq m} \cos\theta_{1i}$ and $\max(|U_1^\top U_1|) \leq p_1 \max_{2 \leq i \leq m} \cos\theta_{1i}$. Next, we comment on the relationship between $U_1, U_2$. Observe that 
    \begin{align}
        \begin{bmatrix} I_{p_2\times p_2} & M_1^2 & ... & M_{n - 1}^{2}  \\ M_1^3 & I_{p_3\times p_3} & ... & M_{n - 1}^3 \\ & \ddots & & \\ M_1^{m} & M_2^m & ... & I_{p_m \times p_m} \end{bmatrix} = U_0^\top U_0 = U_1^\top U_1 + U_2^\top U_2.
    \end{align}
    Note that we can write $ U_1^\top U_1 + U_2^\top U_2 = I + L$, where $L$ contains all the off-diagonal entries. As $L$ is obtained by taking inner products between unit-norm basis vectors across different machines, using the same reasoning as above gives $\max(|L|) \leq \max_{2 \leq i < j \leq m}\cos\theta_{ij}$.  Furthermore, as $U_1^\top U_1 + U_2^\top U_2$ and $I$ are both symmetric, it follows that $L$ is also symmetric.
    \subsubsection*{Step~\ref{step:four_Weyl}} As $V$ is a square matrix,  the eigenvalues of ${S = VV^\top}$ are the same as those of $V^\top V$.
    Moreover, as $V^\top V = R^\top Q^\top QR = R^\top R$, it suffices to inspect the eigenvalues of 
    $$
        R^\top R = \begin{bmatrix} I_{p_1 \times p_1} & U_1 \\ U_1^\top  & U_1^\top U_1 + U_2^\top U_2 \end{bmatrix} = \begin{bmatrix} I & U_1 \\ U_1^\top & I \end{bmatrix} + \begin{bmatrix} 0 & 0 \\ 0 & L \end{bmatrix}.
    $$
    As all of the matrices included in this sum are symmetric, we can conclude by Weyl's inequality that 
    $$
        \lambda_{\min}(R^\top R) \geq \lambda_{\min}\left(\begin{bmatrix} I & U_1 \\ U_1^\top & I \end{bmatrix}\right) + \lambda_{\min}\left(\begin{bmatrix} 0 & 0 \\ 0 & L \end{bmatrix} \right).
    $$
    As $\max(|L|) \leq \max_{2 \leq i < j \leq m}\cos\theta_{ij}$ and $L$, has diagonal entries equal to 0, by Gerschgorin's circle theorem we obtain 
    $$
        \lambda_{\min}\left(\begin{bmatrix} 0 & 0 \\ 0 & L \end{bmatrix} \right) \geq -(n - p_1)\max_{2 \leq i < j \leq m}\cos\theta_{ij}.
    $$ 
    We are left to consider the eigenvalues of 
    $$
        \tilde U_1:=\begin{bmatrix}  I & U_1 \\ U_1^\top & I \end{bmatrix}.
    $$ 
    Let $\lambda$ be an eigenvalue of $\tilde U_1$ that is not equal to 1. Then $(1 - \lambda)I$ is invertible, and using Schur's complement lemma on the characteristic polynomial yields
    \begin{align}
        &\det \left(\begin{bmatrix} I(1 - \lambda) & U_1 \\ U_1^\top & I(1 - \lambda) \end{bmatrix}\right) \cr 
        &= \det\left(I(1 - \lambda)\right) \det((1 - \lambda)I - (1-\lambda)^{-1}U_1^\top U_1).
    \end{align}
    Setting the right-hand-side equal to zero and re-arranging, we obtain 
    \begin{align}~\label{eq:determinant}
        \frac{(1 - \lambda)^{p_1}}{(1 - \lambda)^{n - p_1}}\det((1 - \lambda)^2I - U_1^\top U_1) = 0.
    \end{align}
    From equation~\eqref{eq:determinant} we infer that for every eigenvalue $\lambda$ of $\tilde U_1$ that satisfies $\lambda\ne 1$, the quantity $(1 - \lambda)^2$ is an eigenvalue of $U_1^\top U_1$. Furthermore, every eigenvalue $\lambda'$ of $U_1^\top U_1$ satisfies $|\lambda'| \leq \|U_1^\top U_1\|_\infty$, as the spectral radius of a matrix is bounded by the infinity norm~\cite{meyer2000matrix}. Recalling that $\max(|U_1^\top U_1|) \leq p_1 \max_{2 \leq i \leq m} \cos\theta_{1i}$, we obtain $|\lambda'| \leq \|U_1^\top U_1\|_\infty \leq (n - p_1)p_1\max_{2 \leq i \leq m} \cos\theta_{1i}$. Therefore, we obtain 
    $
        \lambda \in \mathcal{I}:=  [1 - \sqrt{(n - p_1)p_1\max_{2 \leq i \leq m}\cos\theta_{1i}}, 1 + \sqrt{(n - p_1)p_1\max_{2 \leq i \leq m}\cos\theta_{1i}}]
    $. 
    We conclude that every eigenvalue $\lambda \neq 1$ of $\tilde U_1$ is in the interval $\mathcal{I}$. As $1 \in \mathcal{I}$, every eigenvalue of $\tilde U_1$ is in $\mathcal{I}$. Therefore, 
    $$
        \lambda_{\min}(\tilde U_1) \geq 1 - \sqrt{(n - p_1)p_1\max_{2 \leq i \leq m} \cos\theta_{1i}}.
    $$
    Finally, putting it all together, we have:
   \begin{align}\label{eq:eigenvalue_bound1}
    \lambda_{\min}(VV^\top) &= \lambda_{\min}(R^\top R) \cr
    & \geq  \lambda_{\min}\left(\begin{bmatrix} I & U_1 \\ U_1^\top & I \end{bmatrix}\right) + \lambda_{\min}\left(\begin{bmatrix} 0 & 0 \\ 0 & L \end{bmatrix} \right) \cr
    &\geq 1 - \sqrt{(n - p_1)p_1\max_{2 \leq i \leq m} \cos\theta_{1i}}\cr
    &\quad\,\,\,\,\,- (n - p_1)\max_{2 \leq i  < j \leq m}\cos\theta_{ij}. \cr
\end{align}

    Analogously, 
    \begin{align}\label{eq:eigenvalue_bound2}
    \lambda_{\max}(VV^\top) &= \lambda_{\max}(R^\top R) \cr
    &\leq 1 + \sqrt{(n - p_1)p_1\max_{2 \leq i \leq m} \cos\theta_{1i}} \cr
    &\quad\,\,\,\,\,+ (n - p_1)\max_{2 \leq i  < j \leq m}\cos\theta_{ij}. \cr
    \end{align}
    Assuming that $\theta_{\text{H}}$ is sufficiently close to $\frac{\pi}{2}$ such that $1 - \sqrt{p_1(n-p_1)\cos\theta_{\text{H}}} - (n - p)\cos\thetah > 0$, we can safely divide the left-hand side (respectively, the right-hand side) of \eqref{eq:eigenvalue_bound1} by the left-hand side (respectively, the right-hand side) of \eqref{eq:eigenvalue_bound2}  
    and replace all the maxima of cosines with the global maximum - which is $\cos\thetah$ - to obtain our result. Finally, note that we chose to focus our analysis on the first machine, which resulted in a bound that depends on $p_1$. As this choice of machine index was arbitrary, the bound holds for any machine $i\in [m]$ (i.e., we may replace $p_1$ in the bound with any element of $\{p_2,\ldots, p_m\}$). 
\end{proof} 
\subsection{Proof of Theorem~\ref{thm:kappa_S}}
\begin{proof} We perform the steps outlined below Theorem~\ref{thm:kappa_S}.
    \subsubsection*{Step~\ref{step:one_basis}}
    This step is identical to the first step in the proof of Theorem ~\ref{thm:kappa_S1}.
    \subsubsection*{Step ~\ref{step:two_minimax}} 
    Since $V$ is a square matrix, we can use the singular value decomposition of $V$ to show that $V^\top V$ and $VV^\top$ have the same eigenvalues: the squares of the singular values of $V$. Hence, it suffices to bound the singular values of $ V$, which is equivalent to bounding the eigenvalues of $V^\top V$. To this end, we first define 
    $$
        V_i := \begin{bmatrix}
        \uparrow & \uparrow & ... & \uparrow \\
        v_1^i & v_2^i & ... & v_{p_i}^{i} \\
        \downarrow & \downarrow & ... & \downarrow
        \end{bmatrix}
    $$
    so that $V = \begin{bmatrix}V_1 & \cdots & V_m \end{bmatrix}$. As $V^\top V$ is a symmetric matrix, we have the following as a consequence of the Courant-Fischer theorem~\cite{meyer2000matrix}:
    $$
        \lambda_{\max}(V^\top V) = \sup_{\|u\| = 1} u^\top V^\top V u = \sup_{\|u\| = 1}\|Vu\|^2.
    $$
    Similarly,
    $$
        \lambda_{\min}(V^\top V) = \inf_{\|u\| = 1} u^\top V^\top V u = \inf_{\|u\| = 1}\|Vu\|^2.
    $$
    Hence, we analyze $f(u) := \|Vu\|^2$ over the unit sphere ${\{u\in\R^n:\|u\| = 1\}}$.
    \subsubsection*{Step ~\ref{step:three_rephrasing}}
    We partition every $u\in\R^n$ as $u = \begin{bmatrix} u_1^\top & \dots & u_m^\top \end{bmatrix}^\top$, where $u_i \in \mathbb{R}^{p_i}$. This way, it is conformable to block matrix multiplication with $V = \begin{bmatrix}V_1 & \ldots & V_m \end{bmatrix}$. Therefore $f(u) = \|Vu\|^2 = \|\sum_{i = 1}^m V_iu_i\|^2$ over the domain  $\{u\in\R^n:\|u\| = 1\}$.

    Consider the function $g(z_1, ..., z_m) = \|\sum_{i = 1}^m z_i\|^2$ over the domain defined by $z_i \in \mathcal{R}(A_i)$ and $\sum_{i = 1}^m \|z_i\|^2 = 1$. We will show that $\text{Im}(f) = \text{Im}(g)$, where $\text{Im}(\cdot)$ denotes the image of a function. 
    
    We first show $\text{Im}(f) \subseteq \text{Im}(g)$. Consider any vector $u$ in the domain of $f$. Note that the $p_i$ columns of $V_i$ form an orthonormal basis for $\mathcal{R}(A_i)$ by design. Therefore, we obtain  $\|V_iu_i\| = \|u_i\|$ and that $V_iu_i \in \mathcal{R}(A_i)$. Therefore, $V_1u_1, ..., V_mu_m$ lie in the domain of $g$. Moreover, $f(u) = \|\sum_{i = 1}^m V_iu_i\|^2 = g(V_1u_1, ..., V_mu_m)$. Therefore, $\text{Im}(f) \subseteq \text{Im}(g)$.

    We now show $\text{Im}(g) \subseteq \text{Im}(f)$. Consider any $z_1, ..., z_m$ in the domain of $g$. Note that $\{v_1^i, ..., v_{p_i}^i\}$ is an orthonormal basis for $\mathcal{R}(A_i)$, which implies that there exist coefficients $u_{i1}, ..., u_{ip_i}$ such that $z_i = \sum_{j = 1}^{p_i} u_{ij}v_j^i$. This can be written compactly as $z_i = V_iu_i$ where $u_i \in \mathbb{R}^{p_i}$ is a vector of coefficients. As a result, we have $\|z_i\| = \|V_iu_i\| = \|u_i\|$ as the columns of $V_i$ are orthonormal. Therefore, if we define $u = \begin{bmatrix} u_1^\top & \ldots & u_m^\top \end{bmatrix}^\top$, we get that that $\|u\|^2 = \sum_{i = 1}^m \|z_i\|^2 = 1$, and $u \in \mathbb{R}^{p_1 + ... + p_m} = \mathbb{R}^n$, meaning that $u$ is in the domain of $f$. Finally, $g(z_1, ..., z_n) = \|\sum_{i = 1}^m z_i\|^2 = \|\sum_{i = 1}^m V_iu_i\|^2 = f(u)$, proving that $\text{Im}(g) \subseteq \text{Im}(f)$.

    We conclude that $\text{Im}(g) = \text{Im}(f)$, which implies
    \begin{align*}
        \lambda_{\max}(V^\top V) &= \sup_{\|u\| = 1} f(u)\cr
        &= \sup_{z_i \in \mathcal{R}(A_i), \sum_{i = 1}^m \|z_i\|^2 = 1} g(z_1, ..., z_m)
    \end{align*}
    and
    \begin{align*}
        \lambda_{\min}(V^\top V) &=\inf_{\|u\| = 1} f(u)\cr
        &= \inf_{z_i \in \mathcal{R}(A_i), \sum_{i = 1}^m \|z_i\|^2 = 1} g(z_1, ..., z_m).
    \end{align*}
    Therefore, it suffices to investigate $g(z_1, ..., z_m)$.

    \subsubsection*{Step ~\ref{step:four_final}} Observe that
    $$g(z_1, ..., z_m) = \sum_{1 \leq i, j \leq m} z_i^\top z_j = 1 + \sum_{1 \leq i \neq j \leq m} z_i^\top z_j.$$ 
    Recall that $z_i \in \mathcal{R}(A_i)$ and also recall the definitions of $\cos \theta_{ij}$ and $\cos \thetah$ to obtain $|z_i^\top z_j| \leq \|z_i\|\|z_j\|\cos \theta_{ij} \leq \|z_i\|\|z_j\|\cos \thetah$. Therefore, $g(z_1, ..., z_m) \leq 1 + \cos\thetah\left(\sum_{1 \leq i \neq j \leq m} \|z_i\|\|z_j\| \right)$. Finally, note that $\|z_i\|\|z_j\| \leq \frac{\|z_i\|^2 + \|z_j\|^2}{2}$ by the AM-GM inequality, yielding 
    \begin{align*}
        g(z_1, ..., z_m) &\leq 1 + \cos\thetah (m-1) \left(\sum_{i = 1}^m \|z_i\|^2 \right) \cr &= 1 + (m-1)\cos\thetah
    \end{align*}
    Analogously, we obtain $g(z_1, ..., z_m) \geq 1 - (m - 1)\cos\thetah$. Therefore,
    \begin{gather*}
    \lambda_{\max}(V^\top V) = \sup_{\|u\| = 1}\|Vu\|^2 = \sup_{\|u\| = 1} f(u) = \\ \sup_{z_i \in \mathcal{R}(A_i), \sum_{i = 1}^m \|z_i\|^2 = 1} g(z_1, ..., z_m) \leq 1 + (m - 1)\cos\thetah
    \end{gather*}
    And similarly:
    \begin{gather*}
    \lambda_{\min}(V^\top V) = \inf_{\|u\| = 1}\|Vu\|^2 = \inf_{\|u\| = 1} f(u) = \\ \inf_{z_i \in \mathcal{R}(A_i), \sum_{i = 1}^m \|z_i\|^2 = 1} g(z_1, ..., z_m) \geq 1 - (m - 1)\cos\thetah
    \end{gather*}
    Assuming that $1 - (m - 1)\cos\thetah > 0$, it follows that
    $$
        \frac{\lambda_{\max}(V^\top V) }{\lambda_{\min}(V^\top V)} \leq \frac{1+(m-1)\cos\thetah }{1-(m-1)\cos\thetah},
    $$
    which completes the proof.
\end{proof}

\subsection{Proof of Theorem~\ref{prop:a_transpose_a}}
\begin{proof}
    We perform the steps outlined below Theorem~\ref{prop:a_transpose_a}.
    \subsubsection*{Step~\ref{step:one}} Observe that 
    \begin{align*}
        \kappa(A^\top A)&\stackrel{(a)}=  \kappa( A A^\top)\cr
        &= \kappa(R^\top Q^\top QR)\cr &\stackrel{(b)}=\kappa(R^\top R)\cr
        &\stackrel{(c)} = (\kappa(R))^2,
    \end{align*}
    where $(a)$ holds because $ A^\top  A$ and $ A A^\top $ have the same eigenvalues (and by symmetry the same singular values as well), $(b)$ holds because $Q^\top Q=I$, which follows from the fact that $Q$ has orthonormal columns by definition, and $(c)$ holds because the singular values of $R^\top R$ are the squares of those of $R$.

    Similarly, since the singular values of $A^\top A$ are the squares of those of $A$, we assert that $\kappa(A^\top A) = (\kappa(A))^2$. Thus, we have shown that $\kappa(A^\top A) = (\kappa(A))^2 = (\kappa(R))^2.$ Since condition numbers are non-negative by definition, this implies that
    \begin{align}~\label{eq:use_in_the_end}
    \kappa(A) = \kappa(R).    
    \end{align}
    \subsubsection*{Step~\ref{step:two}}
     We first sort the columns of $A^\top$ (or the rows of $A$) in the descending order of their norms. Note that re-ordering the columns of $A^\top$ is equivalent to right-multiplying $A^\top$ by a permutation matrix $P\in\R^{n\times n}$. So, we choose $P$ in such a way that the $k$-th column $\tilde a_k$ of the column-sorted matrix $\tilde A^\top:= A^\top P$ satisfies $\norm{\tilde a_k}\ge\norm{\tilde a_\ell}$ for all $\ell\ge k$. We now let $\tilde A^\top = \tilde Q\tilde R$ denote the $QR$ decomposition of $\tilde A^\top $ and  observe that 
    \begin{align}\label{eq:A_anp_R}
        \kappa(A^\top A)&\stackrel{(a)}=\kappa(A^\top PP^\top A)\cr
        &=\kappa(\tilde A^\top \tilde A)\cr
        &\stackrel{(b)}=  \kappa(\tilde A\tilde A^\top)\cr
        &= \kappa(\tilde R^\top \tilde Q^\top \tilde Q\tilde R)\cr &\stackrel{(c)}=\kappa(\tilde R^\top \tilde R)\cr
        &\stackrel{(d)} = (\kappa(\tilde R))^2,
    \end{align}
    where $(a)$ holds because $P$ being a permutation implies  $PP^\top =I$, $(b)$ holds because $\tilde A^\top \tilde A$ and $\tilde A\tilde A^\top $ have the same eigenvalues (and by symmetry the same singular values as well), $(c)$ holds because $\tilde Q^\top \tilde Q=I$, which follows from the fact that $\tilde Q$ has orthonormal columns by definition, and $(d)$ holds because the singular values of $\tilde R^\top \tilde R$ are the squares of those of $\tilde R$. We now relabel $\tilde Q$ and  $\tilde R$ as $Q$ and $R$, respectively, for ease of notation.

    Next, let $\rho(\cdot)$ denote the spectral radius of a square matrix, and observe that
    \begin{align}\label{eq:conp_num_R}
        \kappa(R)&\stackrel{(a)}=\norm{R} \norm{R^{-1}}\cr
        &\stackrel{(b)}{\ge} \rho(R)\rho(R^{-1})\cr
        &= \frac{ \rho(R)  }{ |\lambda_{\min}(R)|  }\stackrel{(c)}= \frac{\max_{k\in [n]} |r_{kk}|  }{\min_{k\in [n]}|r_{kk}|},
    \end{align}
    where $(a)$ follows from a standard definition of the condition number of a matrix~\cite{meyer2000matrix}, $(b)$ holds because any matrix norm of a given square matrix serves as an upper bound on its spectral radius~\cite{meyer2000matrix}, and $(c)$ holds because $R$ being triangular implies that its eigenvalues are equal to its diagonal entries $\{r_{kk}\}_{k=1}^n$. These diagonal entries are given by Gram--Schmidt orthogonalization~\cite{meyer2000matrix} as
    \begin{align}\label{eq:diag}
        r_{11} = \norm{\tilde a_1},\text{  }r_{kk}= \left\|\tilde a_k -\sum_{s=1}^{k-1} (q_s^\top\tilde a_k) q_s\right\|\text{ for }k>1,
    \end{align}
    where $q_s$ denotes the $s$-th column of $Q$ for $s\in [n]$.
        
    \subsubsection*{Step 3} 
        To bound the above expression in terms of $\norm{\tilde a_k}$ and $\thetamin^{(k)}$, observe first that 
    \begin{align}\label{eq:short}
        r_{kk}^2 &\stackrel{(a)}= \norm{\tilde a_k}^2 - \norm{\sum_{s=1}^{k-1} (q_s^\top\tilde a_k) q_s }^2 \stackrel{(b)}=\norm{\tilde a_k}^2 - \sum_{s=1}^{k-1} (q_s^\top \tilde a_k)^2,
    \end{align}
    where $(a)$ follows from the orthogonality of $\sum_{s=1}^{k-1} (q_s^\top\tilde a_k) q_s$ and $\tilde a_k - \sum_{s=1}^{k-1} (q_s^\top\tilde a_k) q_s$ and $(b)$ follows from the fact that $\{q_s\}_{s=1}^n$ are orthonormal. Therefore, it is sufficient to bound $\sum_{s=1}^{k-1} (q_s^\top \tilde a_k)$ in terms of the angles between $\tilde a_k$ and  other columns of $\tilde A^\top $ (which form a subset of the rows of $A$). To this end, choose 
    $$
        \ell^*:=\argmax_{1\le\ell<k }\left(\frac{|\tilde a_\ell^\top \tilde a_k|}{\norm{\tilde a_\ell}\norm{\tilde a_k} } \right)
    $$
    as the row index that minimizes the angle between $\tilde a_\ell$ and $\tilde a_k$ for $1\le \ell< k$, use $\tilde \theta_{\min}^{(1:k)}:=\cos^{-1}\left(\frac{|\tilde a_{\ell^*}^\top \tilde a_k|}{\norm{\tilde a_{\ell^*}}\norm{\tilde a_k} } \right)$ to denote this minimum angle, and observe that
    \begin{align*}
        &\norm{\tilde a_k}\norm{\tilde a_{\ell^*}}\cos\tilde\theta_{\min}^{(1:k)}\cr
        &=\tilde a_k^\top \tilde a_{\ell^*}\cr
        &=\tilde a_k^\top  QQ^\top  \tilde a_{\ell^*}\cr
        &\stackrel{(a)}=\left(Q^{-1}\tilde a_k\right)^\top \left(Q^{-1}\tilde a_{\ell^*}\right)\cr&
        \stackrel{(b)}=\sum_{s=1}^n \left(q_s^\top  \tilde a_k\right) \left( q_s^\top  \tilde a_{\ell^*}\right)\cr
        &\stackrel{(c)}{=} \sum_{s=1}^{k-1} \left(q_s^\top  \tilde a_k\right) \left( q_s^\top  \tilde a_{\ell^*}\right)\cr
        &\stackrel{(d)}\le \left(\sum_{s=1}^{k-1} \left(q_s^\top \tilde a_k\right)^2 \right)^{\frac{1}{2}}\left(\sum_{t=1}^{k-1} \left ( q_t^\top  \tilde a_{\ell^*}\right)^2\right)^{\frac{1}{2}}\cr
        &\le \left(\sum_{s=1}^{k-1} \left(q_s^\top \tilde a_k\right)^2 \right)^{\frac{1}{2}}\left(\sum_{t=1}^{n} \left ( q_t^\top  \tilde a_{\ell^*}\right)^2\right)^{\frac{1}{2}}\cr&\stackrel{(e)}=\left(\sum_{s=1}^{k - 1} \left(q_s^\top \tilde a_k\right)^2 \right)^{\frac{1}{2}} \norm{\tilde a_{\ell^*}},
    \end{align*}
    where $(a)$ holds because $Q$ being an orthogonal matrix implies  $Q^\top = Q^{-1}$, $(b)$ and $(c)$ hold because $Q^{-1}\tilde a_k$ and $Q^{-1}\tilde a_{\ell^*}$ are, respectively, the $k$-th and the $\ell^*$-th columns of $Q^{-1}\tilde A^\top=R$, whose entries, respectively, are $\{q_s^\top \tilde a_k\}_{s=1}^n$ and $\{q_s^\top \tilde a_{\ell^*}\}_{s=1}^n$ (and as $\ell^* < k$ we have that $q_s^\top \tilde a_{\ell^*} = 0$ for $s \geq k$), $(d)$ follows from Cauchy--Schwarz inequality, and $(e)$ holds as $a_{\ell^*}$ can be projected onto an orthonormal basis $\{q_t\}_{t = 1}^n$ where each component is $q_t^\top a_{\ell^*}$. Dividing throughout by $\norm{\tilde a_{\ell^*}}$ now yields
    \begin{align}\label{eq:shorter}
        \sum_{s=1}^{k - 1} \left(q_s^\top \tilde a_k \right)^2 \ge \norm{\tilde a_k}^2 \cos^2 \tilde \theta_{\min}^{(1:k)}. 
    \end{align}
    Combining~\eqref{eq:short} and~\eqref{eq:shorter} culminates in 
    $$
        r_{kk}^2 \le \norm{\tilde a_k}^2 - \norm{\tilde a_k}^2 \cos^2 \tilde \theta_{\min}^{1:k} = \norm{\tilde a_k}^2 \sin^2\theta_{\min}^{(1:k)}.
    $$
    Hence,
    \begin{align}\label{eq:min_entry}
        \min_{k\in[n]} r_{kk} \le \min_{k\in[n]} \left\{\norm{\tilde a_k}\sin\theta_{\min}^{(1:k)}\right\}.
    \end{align}
    We now consider~\eqref{eq:min_entry} and replace $\theta_{\min}^{(1:k)}$, which is the minimum angle between $\tilde a_k$ and any of $\{\tilde a_s:1\le s<k\}$, with 
    $$
        \tilde \theta_{\min}^{(k)}:= \cos^{-1}\max_{\ell\in[n]\setminus \{k\} }\left(\frac{|\tilde a_k^\top\tilde a_\ell  |}{ \norm{\tilde a_k} \norm{\tilde a_\ell} } \right),
    $$
    which is the minimum angle between $\tilde a_k$ and any other row of $A$. This is possible for the following reasons: for the index $\ell_* :=\argmax_{\ell\in[n]\setminus\{k\} }\left(\frac{|\tilde a_k^\top\tilde a_\ell  |}{ \norm{\tilde a_k} \norm{\tilde a_\ell} } \right)$, we either have $\ell_*<k$ or $k<\ell_*\le n$. In the former case, we have $\tilde \theta_{\min}^{(1:k)} = \tilde \theta_{\min}^{(k)}$, whereas in the latter case we have 
    $$
        \tilde \theta_{\min}^{(1:k)}>\tilde \theta_{\min}^{(k)} = \cos^{-1}\left(\frac{|\tilde a_k^\top\tilde a_{\ell_*}  |}{ \norm{\tilde a_k} \norm{\tilde a_{\ell_*}} } \right) \ge \tilde \theta_{\min}^{(1:\ell_*)}
    $$
    as well as $\norm{\tilde a_k}\ge \norm{\tilde a_{\ell_* }}$ because of the way we have sorted the rows of $\tilde A$. In either case, we have
    $$
      \min\left\{\norm{\tilde a_{\ell_*}}\sin\tilde \theta_{\min}^{(1:\ell_*)}, \norm{\tilde a_{k}}\sin\tilde \theta_{\min}^{(1:k)} \right\}\le   \norm{\tilde a_{k}}\sin\tilde \theta_{\min}^{(k)}.
    $$
    Therefore, let $k_* = \argmin_{k \in [n]} \left\{\norm{\tilde a_k} \sin \theta_{\min}^{(k)}\right\}$. If $l_*$ is the minimizer of the angle associated with the row $k_*$ as defined above, note the following inequality chain $\norm{\tilde a_{k_*}} \sin \theta_{\min}^{(k_*)} \ge \min\left\{\norm{\tilde a_{\ell_*}}\sin\tilde \theta_{\min}^{(1:\ell_*)}, \norm{\tilde a_{k^*}}\sin\tilde \theta_{\min}^{(1:k_*)} \right\} \ge \min_{k \in [n]}\left\{\norm{\tilde a_{k}}\sin\tilde \theta_{\min}^{(1:k)} \right\}$. Finally, this implies that $\min_{k \in [n]}\left\{\norm{\tilde a_{k}}\sin\tilde \theta_{\min}^{(k)} \right\} \geq \min_{k \in [n]}\left\{\norm{\tilde a_{k}}\sin\tilde \theta_{\min}^{(1:k)} \right\}$.

    In light of~\eqref{eq:min_entry}, this means  $\min_k r_{kk}\le \min_k \norm{\tilde a_k}\sin\tilde \theta_{\min}^{(k)}$. Since $\{\tilde a_k:k\in [n]\}= \{ a_k:k\in [n] \}$, it follows that $\{\tilde \theta_{\min}^{(k)}:k\in[n]\}=\{\thetamin^{(k)}:k\in[n]\}$, and hence,
    \begin{align}~\label{eq:min_rkk}
        \min_{k\in[n] } r_{kk}\le \min_{k\in [n]} \left\{ \norm{ a_k}\sin \theta_{\min}^{(k)}\right\}.
    \end{align}
    \subsubsection*{Step~\ref{step:four}}
    Finally, we observe that:
    \begin{align}~\label{eq:semi-final}
        \kappa(R)&\stackrel{(a)}\ge\frac{r_{11}}{\min_{k\in[n]} r_{kk}} \stackrel{(b)}\ge\frac{ \norm{\tilde a_{1}} }{\min_{k\in [n]} \left\{ \norm{\tilde a_k}\sin \theta_{\min}^{(k)}\right\} }\cr
        &\stackrel{(c)}= \frac{ \max_{k\in[n] }\norm{ a_{k}} }{\min_{\ell\in [n]} \left\{ \norm{ a_\ell}\sin \theta_{\min}^{(\ell)}\right\} }, 
    \end{align}
    where $(a)$ follows from~\eqref{eq:conp_num_R} and the non-negativity of $\{r_{kk}\}_{k=1}^n$, $(b)$ follows from~\eqref{eq:diag} and~\eqref{eq:min_rkk}, and $(c)$ is a result of the way $\tilde A$ sorts the rows of $A$ (see Step~\ref{step:one}). Invoking~\eqref{eq:use_in_the_end} now yields the desired lower bound.
\end{proof}

\end{document}